\theoremstyle{plain}
\newtheorem{theorem}{Theorem}[section]
\newtheorem{lemma}[theorem]{Lemma}
\newtheorem{corollary}[theorem]{Corollary}
\newtheorem{observation}[theorem]{Observation}
\theoremstyle{definition}
\newtheorem{definition}[theorem]{Definition}
\DeclareMathOperator{\reach}{R}
\DeclareMathOperator{\wreach}{W}
\DeclareMathOperator{\bcon}{B}
\DeclareMathOperator{\col}{col}
\DeclareMathOperator{\wcol}{wcol}
\DeclareMathOperator{\adm}{adm}
\DeclareMathOperator{\degen}{degen}
\DeclareMathOperator{\depth}{depth}
\DeclareMathOperator{\est}{est}
\mathchardef\hyph="2D
\newcommand{\tcts}{\probname{2-Clause 3-SAT}}
\newcommand{\rsat}{\probname{Exact $r$-SAT}}
\newcommand{\affilformat}[6]{\affil[#1]{\textbf{#2}\vskip 0pt \textbf{#3}\vskip 0pt #4\vskip 0pt #5\vskip 0pt #6}}
\title{Hardness of the Generalized Coloring Numbers}
\author[1]{Michael Breen-McKay}
\author[2]{Brian Lavallee}
\author[2]{Blair D.\ Sullivan}
\date{15 September 2022}
\begin{document}

\maketitle

\begin{abstract}
	The generalized coloring numbers of Kierstead and Yang (Order 2003) offer an algorithmically-useful characterization of graph classes with bounded expansion.
	In this work, we consider the hardness and approximability of these parameters.
	First, we complete the work of Grohe et al.\ (WG 2015) by showing that computing the weak 2-coloring number is NP-hard.
	Our approach further establishes that determining if a graph has weak $r$-coloring number at most $k$ is para-NP-hard when parameterized by $k$ for all $r \geq 2$.
	We adapt this to determining if a graph has $r$-coloring number at most $k$ as well, proving para-NP-hardness for all $r \geq 2$.
	Para-NP-hardness implies that no XP algorithm (runtime $O(n^{f(k)})$) exists for testing if a generalized coloring number is at most $k$.
	Moreover, there exists a constant $c$ such that it is NP-hard to approximate the generalized coloring numbers within a factor of $c$.
	To complement these results, we give an approximation algorithm for the generalized coloring numbers, improving both the runtime and approximation factor of the existing approach of Dvo\v{r}\'{a}k (EuJC 2013).
	We prove that greedily ordering vertices with small estimated backconnectivity achieves a $(k-1)^{r-1}$-approximation for the $r$-coloring number and an $O(k^{r-1})$-approximation for the weak $r$-coloring number.
\end{abstract}

\addtocounter{footnote}{1}
\footnotetext{This work was supported in part by the Gordon \& Betty Moore Foundation under award GBMF4560 to Blair D.\ Sullivan.}

\section{Introduction}
In 2008, Ne\v{s}et\v{r}il and Ossona de Mendez introduced \emph{bounded expansion} as a notion of sparsity in graph classes~\cite{nesetril2008grad}.
Bounded expansion generalizes \emph{proper minor-closed} classes which contain (among other things) the heavily-studied classes of \emph{bounded treewidth} and \emph{planar} graphs.
Ne\v{s}et\v{r}il and Ossona de Mendez characterized classes with bounded expansion by the existence of low-treedepth colorings~\cite{nesetril2008grad}, enabling the first of many algorithmic approaches to tackle problems on these broad classes.
While part of the appeal of bounded expansion is its numerous characterizations, low-treedepth colorings alone allowed Dvo\v{r}\'{a}k et al.\ to prove that \probname{FO-Model Checking} is \emph{fixed-parameter tractable} (FPT) in graphs from a class of bounded expansion~\cite{dvorak2010deciding}.
Beyond the many problems expressible in first order logic, characterizations of bounded expansion have been used to give FPT algorithms for \probname{Subgraph Counting}~\cite{nesetril2008GradAC}, \probname{Steiner Tree}~\cite{kreutzer2019algorithmic}, and numerous others~\cite{drange2016kernelization, drange2021harmless, harpeled2015approximation}.

In this work, we focus on the \emph{generalized coloring numbers}, introduced by Kierstead and Yang in 2003~\cite{kierstead2003orderings}, which also characterize classes of bounded expansion~\cite{zhu2009coloring}.
The generalized coloring numbers (consisting of the \emph{$r$-coloring numbers} and the \emph{weak $r$-coloring numbers}) extend the notion of \emph{degeneracy} to neighborhoods of radius $r$.
The generalized coloring numbers are also deeply connected to other important graph properties.
In fact, the $n$-coloring number and the weak $n$-coloring number are equivalent to the \emph{treewidth} and \emph{treedepth} of a graph with $n$ vertices, respectively.
Prior to the introduction of bounded expansion, the generalized coloring numbers were studied primarily in the context of the \emph{game chromatic number}~\cite{kierstead2003orderings}.\looseness-1

Zhu's connection of these parameters to bounded expansion in 2009~\cite{zhu2009coloring} renewed interest in analyzing and utilizing them.
Reidl et al.\ used the weak coloring numbers to prove that \emph{neighborhood complexity} characterizes bounded expansion~\cite{reidl2019neighborhood}, while Nadara et al.\ initiated an empirical study of algorithmic techniques used to compute the generalized coloring numbers in practice~\cite{nadara2019empirical}.
The generalized coloring numbers also led to new algorithms for bounded expansion;
Dvo\v{r}\'{a}k gave an approximation for \probname{Dominating Set}~\cite{dvorak2013constant}, and Reidl and Sullivan gave an exact FPT algorithm for \probname{Subgraph Counting}~\cite{reidl2023color}.
Both of these algorithms rely on computing a linear order of the vertices which witnesses a small weak $r$-coloring number for some fixed radius $r$.
Grohe et al.\ showed that determining the weak $r$-coloring number is NP-hard when $r \geq 3$~\cite{grohe2018coloring}, and so both algorithms use an approximation to find a suitable order.
Dvo\v{r}\'{a}k gave a parameterized approximation for the weak $r$-coloring number in $O(rn^3)$ time through an approximation for \emph{$r$-admissibility}~\cite{dvorak2013constant}.

In this paper, we prove that both of the generalized coloring numbers are para-NP-hard to compute when $r \geq 2$.
For the weak coloring numbers, this improves upon the result of Grohe et al.~\cite{grohe2018coloring} who showed that the weak $r$-coloring number is NP-hard to compute when $r \geq 3$.
For the coloring numbers, no hardness results were previously known.
Our approach uses \tcts in contrast to \probname{Balanced Complete Bipartite Subgraph} as in~\cite{grohe2018coloring}.
Para-NP-hardness implies that there does not exist an XP algorithm (i.e.\ an algorithm with runtime $O(n^{f(k)})$) to check if a generalized coloring number is at most $k$.
Additionally, it implies that there exists a constant $c$ such that it is NP-hard to approximate the generalized coloring numbers within a factor of $c$.

Finally, we show that there is a $(k-1)^{r-1}$-approximation algorithm for the $r$-coloring number, improving both the runtime and approximation guarantees given by Dvo\v{r}\'{a}k~\cite{dvorak2013constant}.
The algorithm applies a greedy strategy directly to the coloring number rather than using admissibility as an intermediary.
We further prove that the order produced by this algorithm gives an $O(k^{r-1})$-approximation for the weak $r$-coloring number.

\section{Preliminaries}
We examine properties of \emph{total orders} defined on the vertices of simple, undirected graphs.
For a graph $G = (V, E)$, we will frequently use $n = |V|$ and $m = |E|$ to denote the number of vertices and edges respectively.
For flexibility in our analysis, we define the properties of interest using \emph{prefix orders}.

\begin{definition} \label{def:prefix_order}
    A \textbf{prefix order} is a \emph{strict partial order} which is also \emph{downward total}.
    That is, a prefix order $\sigma$ of a set $V$ satisfies the following conditions for all $a, b, c \in V$:
    \begin{itemize}
        \setlength\itemsep{-2.6pt}

        \item
        irreflexivity: $a \not<_\sigma a$

        \item
        asymmetry: if $a <_\sigma b$, then $b \not<_\sigma a$

        \item
        transitivity: if $a <_\sigma b$ and $b <_\sigma c$, then $a <_\sigma c$

        \item
        downward totality: if $a <_\sigma c$ and $b <_\sigma c$, then either $a <_\sigma b$ or $b <_\sigma a$
    \end{itemize}
    We say an element $a$ is \textbf{ordered} if there exists an element $b$ such that $a <_\sigma b$.
    All other elements are \textbf{unordered}.
\end{definition}

Intuitively, a prefix order is an ``incomplete'' total order where the position of only the first $i$ elements has been decided and the remaining unordered elements will be added to the end of the order.
Given a prefix order $\sigma$ of the vertices of $G$, we use $G_\sigma$ to refer to the ordered graph defined by the combination of $G$ and $\sigma$.
A prefix order naturally partitions the neighborhood of a vertex $u$ into its forward and backward neighbors.
The same method of partitioning can also be applied to the $r$-neighborhood of $u$ (i.e.\ the set of vertices of distance at most $r$ from $u$ in $G$), but more restrictive notions of forward neighbors have better theoretical and algorithmic applications.
Specifically, we use $r$-reachable and weak $r$-reachable as criteria to determine if a vertex $v$ is a forward neighbor of $u$.

\begin{definition} \label{def:reachable}
    Given a graph $G = (V, E)$ and a prefix order $\sigma$, let $P \subseteq G$ be a $u$-$v$ path of length at most $r$.
    If $v \not<_\sigma u$ and $p <_\sigma u$ for all $p \in P \setminus \{u,v\}$, then we say that $P$ is an \textbf{$r$-qualifying path} and $v$ is \textbf{$r$-reachable} from $u$.
    Similarly, if $v \not<_\sigma u$ and $p <_\sigma v$ for all $p \in P \setminus \{u,v\}$, then we say $P$ is a \textbf{weak $r$-qualifying path} and $v$ is \textbf{weakly $r$-reachable} from $u$.
    Note that all $r$-qualifying paths are also weak $r$-qualifying paths.
\end{definition}

\begin{figure}
\begin{minipage}{0.48\textwidth}
    \centering
    \scalebox{0.91}{
    \begin{tikzpicture}
        \tikzstyle{leafVert}=[circle,minimum size=5pt,inner sep=0pt]

        \node[draw, leafVert] (B) {};
        \node[draw, right=.5cm of B, leafVert] (C) {};
        \node[draw, right=.5cm of C, leafVert] (D) {};
        \node[draw, right=.5cm of D, leafVert] (E) {};
        \node[draw, right=.5cm of E, leafVert, fill, blue] (F) {};
        \node[draw, right=.5cm of F, leafVert] (G) {};
        \node[draw, right=.5cm of G, leafVert] (H) {};
        \node[draw, right=.5cm of H, leafVert] (I) {};
        \node[draw, right=.5cm of I, leafVert] (J) {};
        \node[draw, right=.5cm of J, leafVert] (K) {};
        \node[draw, right=.5cm of K, leafVert] (L) {};

        \node[below=.2cm of B] (B1) {$v_0$};
        \node[below=.2cm of C] (C1) {$v_1$};
        \node[below=.2cm of D] (D1) {$v_2$};
        \node[below=.2cm of E] (E1) {$v_3$};
        \node[below=.2cm of F] (F1) {$v_4$};
        \node[below=.2cm of G] (G1) {$v_5$};
        \node[below=.2cm of H] (H1) {$v_6$};
        \node[below=.2cm of I] (I1) {$v_7$};
        \node[below=.2cm of J] (J1) {$v_8$};
        \node[below=.2cm of K] (K1) {$v_9$};
        \node[below=.2cm of L] (L1) {$v_{10}$};

        \draw  (C) -- (D) -- (E) -- (F);

        \draw (G)-- (H) -- (I) -- (J) -- (K) -- (L);

        \draw (F) -- (G);
        \draw (B) -- (C);

        \draw (B) to[out=45,in=135] (F);
        \draw (B) to[out=45,in=135] (I);
        \draw (B) to[out=45,in=135] (K);

        \draw (C) to[out=35,in=145] (H);

        \draw (F) to [out=35,in=145] (J);
    \end{tikzpicture}
    }
\end{minipage}
\begin{minipage}{0.51\textwidth}
    \[
    \begin{array}{c|c|c|c}
        r & \reach_r(v_4, G\sigma) & \wreach_r(v_4,G_\sigma) \\
        \hline
        1 &\{v_5,v_8\} & \{v_5,v_8\} \\
        2 & \{v_5,v_7,v_8,v_9\} & \{v_5,v_6,v_7,v_8,v_9\} \\
        3 & \{v_5,v_6,v_7,v_8,v_9\} & \{v_5,v_6,v_7,v_8,v_9,v_{10}\}
    \end{array}
    \]
\end{minipage}

\caption{
    We give a total order on the vertices of a graph $G$.
    The chart gives the various vertices reachable from $v_4$ under each definition.
}
\label{fig:linorderex}
\end{figure}
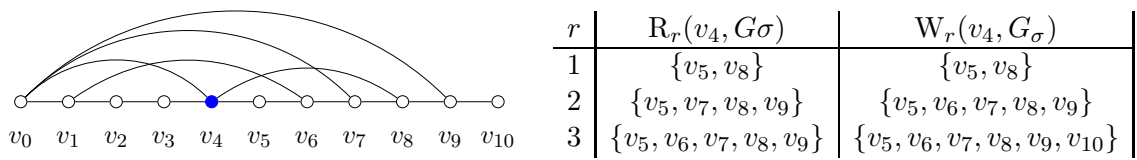

We note that there are minor inconsistencies in the definitions of these terms in prior work.
In particular, we do not consider a vertex to be reachable from itself (resulting in a value that is off-by-one from the definition in e.g.~\cite{grohe2018coloring}).
More significantly, we also reverse the direction of the order used in~\cite{grohe2018coloring} to adhere to a left-to-right convention.
Thus, a vertex reaches to the right using vertices to the left, rather than reaching to the left using vertices to the right.
Both definitions are equivalent, and this can be seen by simply reversing $\sigma$.

We use $\reach_r(u, G_\sigma)$ to denote the set of vertices in $G$ which are $r$-reachable from $u$ with respect to $\sigma$ and $\wreach_r(u, G_\sigma)$ to denote the set of vertices which are weakly $r$-reachable.
The \textbf{$r$-reach} and the \textbf{weak $r$-reach} refer to the sizes of these sets respectively.
Figure~\ref{fig:linorderex} shows the differences between $\reach_r(u, G_\sigma)$ and $\wreach_r(u, G_\sigma)$ on a small example graph.
Using the notions of $r$-reach and weak $r$-reach as analogues of forward degree leads to two natural generalizations of degeneracy.

\begin{definition} \label{def:colnum}
    Given a graph $G = (V, E)$, let $\Pi$ be the set of all total orders of $V$.
    The \textbf{$r$-coloring number} and the \textbf{weak $r$-coloring number} of a graph $G$ are defined as
    \[ \col_r(G) := \min\limits_{\sigma \in \Pi} \max\limits_{u \in V} |\reach_r(u, G_\sigma)|
    \hspace{1cm}
    \wcol_r(G) := \min\limits_{\sigma \in \Pi} \max\limits_{u \in V} |\wreach_r(u, G_\sigma)|. \]
\end{definition}

We will also use $\col_r(G_\sigma)$ and $\wcol_r(G_\sigma)$ to refer to the maximum $r$-reach and maximum weak $r$-reach over all vertices with respect to an order $\sigma$.
Under the given definitions, both of the generalized coloring numbers are closely linked with other important graph measures.
Specifically, $\col_n(G)$ is equal to the treewidth~\cite{grohe2018coloring} and $\wcol_n(G)$ is one less than the treedepth~\cite{nesetril2012sparsity}.
Since $\col_1(G) = \wcol_1(G) = \degen(G)$, these parameters offer an interpolation between degeneracy and treewidth or treedepth.
In this paper, we consider the problem of computing the generalized coloring numbers.

\begin{problem}{$r$-Orderable}
	\Input & a graph $G = (V, E)$ and an integer $k \in \mathbb{N}$. \\
	\Prob & is $\col_r(G) \leq k$?
\end{problem}

\begin{problem}{Weak $r$-Orderable}
	\Input & a graph $G = (V, E)$ and an integer $k \in \mathbb{N}$. \\
	\Prob & is $\wcol_r(G) \leq k$?
\end{problem}

We use \probname{Minimum $r$-Orderability} and \probname{Minimum Weak $r$-Orderability} to refer to the corresponding optimization problems.
Due to the equivalence of the 1-coloring number and weak 1-coloring number with degeneracy, we note that both \probname{1-Orderable} and \probname{Weak 1-Orderable} can be solved in linear time using a greedy algorithm.
Finally, our results make use of \rsat and \tcts, defined below.

\begin{problem}{\rsat}
    \Input & a CNF-formula $\phi$ with clauses $c_1, \dots, c_m$ consisting of variables $x_1, \dots, x_n$ such that each clause contains \emph{exactly} $r$ different variables. \\
    \Prob & is there an assignment of $x_1, \dots, x_n$ that satisfies $\phi$?
\end{problem}

\begin{problem}{\tcts}
	\Input & a CNF-formula $\phi$ with clauses $c_1, \dots, c_m$ consisting of variables $x_1, \dots, x_n$ such that each clause contains \emph{at most} 3 variables and each literal ($x_j$ or $\overline{x}_j$) appears in \emph{exactly} 2 clauses. \\
	\Prob & is there an assignment of $x_1, \dots, x_n$ that satisfies $\phi$?
\end{problem}

Karp proved that \probname{3-SAT} is NP-complete in~\cite{karp1972reducibility}, implying that \rsat is NP-hard for $r \geq 3$.
To see that \tcts is also NP-hard, we use Theorem 2.1 of~\cite{tovey1984simplified} which states that \probname{SAT} remains NP-complete when restricted to instances with 2 or 3 variables per clause and at most 3 occurrences per variable.
These instances of \probname{SAT} can be converted to \tcts in the following manner.
First, we may assume that each literal appears at most twice (and at least once), since a variable which appears only positively or only negatively can be preprocessed without consequence.
If a literal $x$ appears only once, then we add the following clauses using the new variables $y_1, \dots, y_5$ so that it appears twice.
\[
(x \lor y_1 \lor \overline{y}_2) \land
(y_1 \lor y_3 \lor \overline{y}_4) \land
(\overline{y}_1 \lor y_3 \lor \overline{y}_4) \land
(\overline{y}_1 \lor y_4 \lor \overline{y}_5) \land
(y_2 \lor y_4 \lor \overline{y}_5) \land
(y_2 \lor \overline{y}_3 \lor y_5) \land
(\overline{y}_2 \lor \overline{y}_3 \lor y_5)
\]

The additional variables $y_1, \dots, y_5$ all have exactly two appearances per literal, and the additional clauses can be satisfied by setting $y_1, \dots, y_5$ to true.
By repeatedly adding 5 new variables and 7 new clauses in this pattern for every literal which only appears once, we can ensure that we end up with an instance of \tcts.

For the remainder of this paper, we will also assume that no variable appears twice in a single clause in instances of \tcts and \rsat.
If a literal and its negation both appear, then the clause can be removed since it will be satisfied by any assignment.
If the same literal appears twice, then one appearance can be removed without changing the instance.
To adhere to the limitations of \rsat, we replace the removed variable with a new variable $y$ by duplicating the clause, adding $y$ to one copy, and adding $\overline{y}$ to the other.
We also assume that any clauses with only a single literal have been preprocessed.

\section{Computational Complexity}
In~\cite{grohe2018coloring}, the authors showed that \probname{Weak $r$-Orderable} is NP-hard for $r \geq 3$.
Below, we prove the stronger result that determining if $\wcol_r(G) \leq k$ is para-NP-hard parameterized by $k$, even when $r = 2$.
Moreover, we show that determining if $\col_r(G) \leq k$ is also para-NP-hard for $r \geq 2$.

\subsection{Weak 2-Orderable is Para-NP-hard}
In this section, we prove that \probname{Weak 2-Orderable} parameterized by the natural parameter $k$ is para-NP-hard.
This also closes an open question regarding the NP-hardness of \probname{Weak 2-Orderable} from~\cite{grohe2018coloring}.
Our proof reduces from \tcts and shows that determining if $\wcol_2(G) \leq 5$ is NP-hard.
First, we describe the reduction.

\begin{definition} \label{def:wcol2_redux}
	Given an instance $\varphi$ of \tcts, let $G(\varphi)$ be the following graph.
	For each clause $c_i \in \varphi$, we create 6 vertices $u_i^1, \dots, u_i^6$ in $G(\varphi)$.
	If $c_i$ only contains 2 literals, we add 2 more vertices $f_i$ and $f'_i$, each connected to $u_i^1, \dots, u_i^6$.
	Then for each variable $x_j$, we create 2 vertices $v_j$ and $v'_j$ (corresponding to the literals $x_j$ and $\overline{x}_j$ respectively) connected by an edge.
	For each variable $x_j$, add an edge from $v_j$ to each of $u_i^1, \dots, u_i^6$ for each clause $c_i$ which contains $x_j$ as a literal.
	Likewise, add an edge from $v'_j$ to each of $u_i^1, \dots, u_i^6$ for each clause $c_i$ which contains $\overline{x}_j$ as a literal.
\end{definition}

\begin{figure}
	\begin{center}
		\begin{tikzpicture}
	\tikzstyle{leafVert}=[circle,fill=black,minimum size=5pt,inner sep=0pt]

    \node[leafVert] (u11) {};
	\node[leafVert, right=.1cm of u11] (u12) {};
	\node[leafVert, right=.1cm of u12] (u13) {};
	\node[leafVert, right=.1cm of u13] (u14) {};
	\node[leafVert, right=.1cm of u14] (u15) {};
    \node[leafVert, right=.1cm of u15] (u16) {};
    \node[above=.1cm of u11] (l11) {$u_1^1$};
	\node[right=0.05cm of l11] (dots1) {$\dots$};
    \node[above=.1cm of u16] (l16) {$u_1^6$};
	\node[fit=(u11)(u16)(l11)(l16)] (a1) {};

    \node[leafVert, right=2.5cm of u16] (u21) {};
	\node[leafVert, right=.1cm of u21] (u22) {};
	\node[leafVert, right=.1cm of u22] (u23) {};
	\node[leafVert, right=.1cm of u23] (u24) {};
	\node[leafVert, right=.1cm of u24] (u25) {};
    \node[leafVert, right=.1cm of u25] (u26) {};
    \node[above=.1cm of u21] (l21) {$u_2^1$};
	\node[right=0.05cm of l21] (dots2) {$\dots$};
    \node[above=.1cm of u26] (l26) {$u_2^6$};
	\node[fit=(u21)(u26)(l21)(l26)] (a2) {};

	\node[leafVert, below=1cm of a1, label=left:{$v_2$}] (v2) {};
	\node[leafVert, below=1cm of v2, label=left:{$v'_2$}] (nv2) {};
	\draw (v2) -- (nv2);
	\draw (v2) -- (u11);
	\draw (v2) -- (u12);
	\draw (v2) -- (u13);
	\draw (v2) -- (u14);
	\draw (v2) -- (u15);
	\draw (v2) -- (u16);

    \node[leafVert, left=1.75cm of v2, label=left:{$v_1$}] (v1) {};
	\node[leafVert, below=1cm of v1, label=left:{$v'_1$}] (nv1) {};
	\draw (v1) -- (nv1);
	\draw (v1) -- (u11);
	\draw (v1) -- (u12);
	\draw (v1) -- (u13);
	\draw (v1) -- (u14);
	\draw (v1) -- (u15);
	\draw (v1) -- (u16);

    \node[leafVert, right=1.75cm of v2, label=left:{$v_3$}] (v3) {};
	\node[leafVert, below=1cm of v3, label=left:{$v'_3$}] (nv3) {};
	\draw (v3) -- (nv3);
	\draw (v3) -- (u11);
	\draw (v3) -- (u12);
	\draw (v3) -- (u13);
	\draw (v3) -- (u14);
	\draw (v3) -- (u15);
	\draw (v3) -- (u16);

	\draw (nv3) -- (u21);
	\draw (nv3) -- (u22);
	\draw (nv3) -- (u23);
	\draw (nv3) -- (u24);
	\draw (nv3) -- (u25);
	\draw (nv3) -- (u26);

    \node[leafVert, below=1cm of a2, label=right:{$v_4$}] (v4) {};
	\node[leafVert, below=1cm of v4, label=right:{$v'_4$}] (nv4) {};
	\draw (v4) -- (nv4);
	\draw (v4) -- (u21);
	\draw (v4) -- (u22);
	\draw (v4) -- (u23);
	\draw (v4) -- (u24);
	\draw (v4) -- (u25);
	\draw (v4) -- (u26);

	\node[leafVert, below right=0cm and 0.5cm of a2, label=right:{$f_2$}] (fu2) {};
	\node[leafVert, below right=0.5cm and 0.5cm of a2, label=right:{$f'_2$}] (fpu2) {};

	\draw (fu2) edge[bend left=15] (u21);
	\draw (fu2) edge[bend left=15] (u22);
	\draw (fu2) edge[bend left=15] (u23);
	\draw (fu2) edge[bend left=15] (u24);
	\draw (fu2) edge[bend left=15] (u25);
	\draw (fu2) edge[bend left=15] (u26);

	\draw (fpu2) edge[bend left=15] (u21);
	\draw (fpu2) edge[bend left=15] (u22);
	\draw (fpu2) edge[bend left=15] (u23);
	\draw (fpu2) edge[bend left=15] (u24);
	\draw (fpu2) edge[bend left=15] (u25);
	\draw (fpu2) edge[bend left=15] (u26);
\end{tikzpicture}
	\end{center}

	\caption{
		The subgraph of $G(\varphi)$ corresponding to the clauses $c_1 = (x_1 \lor x_2 \lor x_3)$ and $c_2 = (\overline{x}_3 \lor x_4)$.
	}
	\label{fig:wcol2_Gadget}
\end{figure}

See Figure~\ref{fig:wcol2_Gadget} for an example of $G(\varphi)$.
In the conversion between orders and assignments, the relative position of $v_j$ and $v'_j$ corresponds to the assignment of $x_j$.
If $v_j$ appears first, then $x_j$ is set to false, and if instead $v'_j$ appears first, then $x_j$ is set to true.
Thus, a clause vertex $u_i^\ell$ (which must appear at the beginning of the order) has 2-reach equal to 6 only if $c_i$ contains 3 false literals in the converted assignment.
We prove that $\varphi$ is satisfiable if and only if $\wcol_2(G(\varphi)) \leq 5$.
The forward direction uses a satisfying assignment of $\varphi$ to construct an order on $G(\varphi)$ with weak 2-reach at most 5.

\begin{lemma} \label{lem:sat-wcol2}
	Let $\varphi$ be an instance of \tcts and let $G(\varphi)$ be the graph produced by Definition~\ref{def:wcol2_redux}.
	If $\varphi$ is satisfiable, then $\wcol_2(G(\varphi)) \leq 5$.
\end{lemma}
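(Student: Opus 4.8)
The plan is to turn a satisfying assignment $\beta$ of $\varphi$ into a total order $\sigma$ of $V(G(\varphi))$ witnessing $\wcol_2(G(\varphi)_\sigma) \le 5$. I would build $\sigma$ in three consecutive blocks: first all clause vertices $u_i^\ell$ in an arbitrary order; then all fill vertices $f_i, f'_i$, fixing $f_i <_\sigma f'_i$ within each pair; and finally all literal vertices, where for each variable $x_j$ I place $v_j$ before $v'_j$ when $\beta(x_j)$ is false and $v'_j$ before $v_j$ when $\beta(x_j)$ is true (the relative order of literal vertices of different variables is immaterial). This is exactly the correspondence described before the lemma. I would then bound $|\wreach_2(w, G(\varphi)_\sigma)|$ for every vertex $w$, split by vertex type.

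The key case is a clause vertex $w = u_i^\ell$. Since the clause vertices form the initial block, every neighbor of $u_i^\ell$ lies later, so each literal vertex of $c_i$ — and, when $c_i$ has only two literals, also $f_i$ and $f'_i$ — is weakly $2$-reachable. Any further weakly $2$-reachable vertex must lie on a qualifying path $u_i^\ell - a - w'$ with $a$ a neighbor of $u_i^\ell$. If $a$ is a fill vertex, its only neighbors are clause vertices, which precede it, so no such path exists; hence $a$ is the literal vertex of some literal $\ell_t$ of $c_i$, and the only neighbor $w'$ of $a$ not preceding $a$ is the partner literal vertex of $\ell_t$, which qualifies precisely when $\ell_t$ is false under $\beta$ (this is where the left-to-right placement of $v_j$ versus $v'_j$ matters). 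Thus $|\wreach_2(u_i^\ell, G(\varphi)_\sigma)|$ equals the number of literals of $c_i$, plus $2$ if $c_i$ has two literals, plus the number of literals of $c_i$ false under $\beta$. Because $\beta$ satisfies $c_i$, at most $|c_i| - 1$ of its literals are false, giving at most $3 + 2 = 5$ when $c_i$ has three literals and at most $4 + 1 = 5$ when it has two. This is the only step that consumes the hypothesis, and arranging the three blocks so that the ``extra'' contribution is exactly the false-literal count is the point to get right — in particular the fill vertices must sit between the clause block and the literal block, not after it.

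The remaining cases are routine. For a fill vertex $w = f_i$ (and similarly $f'_i$): its only neighbors $u_i^1,\dots,u_i^6$ all precede it, so its weak $2$-reach comes only from length-$2$ paths $f_i - u_i^\ell - w'$, where $w'$ can only be one of the two literal vertices of $c_i$ or $f'_i$ (all later than $f_i$), giving at most $3$. For a literal vertex $w = v_j$ (symmetrically $v'_j$): all clause and fill vertices precede it, so none is weakly $2$-reachable; the only length-$1$ option is the partner $v'_j$ (reachable iff later), and every length-$2$ qualifying path has the form $v_j - u_i^\ell - w'$ with $c_i \ni x_j$ and $w'$ a literal vertex of $c_i$, and since $x_j$ occurs in exactly two clauses, each with at most three literals, there are at most $2 \cdot 2 = 4$ such $w'$ (and $v'_j$ is not among them, as $x_j$ and $\overline{x}_j$ never share a clause). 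Hence $|\wreach_2(v_j, G(\varphi)_\sigma)| \le 1 + 4 = 5$. Combining the three cases gives $\wcol_2(G(\varphi)) \le \wcol_2(G(\varphi)_\sigma) \le 5$.
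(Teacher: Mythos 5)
Your proposal is correct and follows essentially the same route as the paper: order clause (and fill) vertices first and literal vertices last, encode the assignment in the relative order of $v_j$ and $v'_j$, and bound the weak 2-reach by a case analysis on vertex type, using satisfiability only to cap the clause vertices' reach at 5. The only deviations are cosmetic — you place all fill vertices in one block after all clause vertices instead of interleaving them per clause (immaterial, since $f_i, f'_i$ attach only to $u_i^1,\dots,u_i^6$), and your count for clause vertices is stated as an exact formula rather than the paper's ``one true literal removes one vertex'' argument.
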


\begin{proof}
	Let $G = G(\varphi)$, and let $A$ be an assignment of $x_1, \dots, x_n$ witnessing that $\varphi$ is satisfiable.
	Construct the total order $\sigma$ in the following manner.
	First, for each clause $c_i$, add $u_i^1, \dots, u_i^6$ to $\sigma$ in that order.
	If $c_i$ only contains 2 literals, place $f_i$ and $f'_i$ immediately after $u_i^6$.
	Then, for each variable $x_j$, add $v'_j$ and then $v_j$ if $x_j$ is true in $A$.
	Otherwise, add $v_j$ and then $v'_j$.

	Consider $\wreach_2(u_i^\ell, G_\sigma)$, and suppose that $c_i$ contains 3 literals.
	Since $u_i^\ell$ only has edges to literal vertices (of the form $v_j$ or $v'_j$) and $v_j, v'_j >_\sigma u_p^q$ for any $j$, it cannot reach any other $u_p^q$ (even if $p = i$).
	Thus, $|\wreach_2(u_i^\ell, G_\sigma)| \leq 6$ since there are only 6 vertices (corresponding to the 3 literals in the clause and their negations) remaining in its 2-neighborhood.
	This still holds if $c_i$ only contains 2 literals, since $f_i, f'_i >_\sigma u_i^6$ and neither $f_i$ nor $f'_i$ have edges outside of the clause.
	Since $A$ is a satisfying assignment, at least one literal in $c_i$ must be set to true.
	Without loss of generality, assume $x_j \in c_i$ and $x_j$ is true.
	Thus, $v_j$ appears after $v'_j$ by the definition of $\sigma$.
	Then $v'_j \not\in \wreach_2(u_i^\ell, G_\sigma)$, and so $|\wreach_2(u_i^\ell, G_\sigma)| \leq 5$.

	Next, consider $\wreach_2(f_i, G_\sigma)$.
	Since $u_i^\ell <_\sigma f_i$, $f_i$ can only reach $f'_i$ and vertices corresponding to literals in $c_i$.
	Thus, $|\wreach_2(f_i, G_\sigma)| = 3$ and similarly $|\wreach_2(f'_i, G_\sigma)| = 2$.

	Finally, consider $\wreach_2(v_j, G_\sigma)$.
	Since $u_i^\ell <_\sigma v_j$ and $f_i <_\sigma v_j$, $v_j$ can only reach other vertices corresponding to literals.
	If $x_j$ is false in $A$, then $v'_j$ is weakly 2-reachable from $v_j$.
	The only other vertices in the 2-neighborhood of $v_j$ are the other literals in the clauses which contain $x_j$.
	Since $\varphi$ is an instance of \tcts, there are only 4 of these.
	Thus, $|\wreach_2(v_j, G_\sigma)| \leq 5$.
	The argument is symmetric for $v'_j$.
	Since the weak 2-reach of every vertex is at most $5$, $\wcol_2(G_\sigma) \leq 5$.
\end{proof}

To complete the proof, the reverse direction argues that the assignment implied by an order $\sigma$ either satisfies $\varphi$, or there exists a subset of vertices (corresponding to an unsatisfied clause) which cannot have weak 2-reach at most 5 in $\sigma$.

\begin{lemma} \label{lem:wcol2-sat}
	Let $\varphi$ be an instance of \tcts and let $G(\varphi)$ be the graph produced by Definition~\ref{def:wcol2_redux}.
	If $\wcol_2(G(\varphi)) \leq 5$, then $\varphi$ has a satisfying assignment.
\end{lemma}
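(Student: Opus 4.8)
The plan is to prove the contrapositive-style converse: given a total order $\sigma$ on $V(G(\varphi))$ with $\wcol_2(G_\sigma) \le 5$, extract an assignment $A$ of the variables and show it satisfies every clause. The natural assignment is the one suggested in the discussion before Lemma~\ref{lem:sat-wcol2}: for each variable $x_j$, set $x_j$ to true if $v'_j <_\sigma v_j$ and to false otherwise (i.e.\ the literal vertex appearing \emph{earlier} corresponds to the \emph{satisfied} literal orientation — wait, I must match the convention: ``if $v_j$ appears first, then $x_j$ is set to false''). So $A(x_j) = \textsf{true}$ iff $v'_j <_\sigma v_j$.

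First I would establish structural facts about where the clause vertices $u_i^1,\dots,u_i^6$ (and the fillers $f_i,f'_i$) can sit in any order witnessing $\wcol_2 \le 5$. The key point: the six vertices $u_i^1,\dots,u_i^6$ together with whatever literal/filler vertices are adjacent to them form a dense bipartite-like structure, and among any set of seven mutually ``2-close'' vertices, the one placed earliest weakly 2-reaches all six others. More precisely, for a 3-literal clause $c_i$ with literal vertices $w_1,w_2,w_3$ (each $w_t$ being $v_{j}$ or $v'_{j}$ for the variable in that position), the nine vertices $u_i^1,\dots,u_i^6,w_1,w_2,w_3$ all lie within pairwise distance $2$ (any two $u_i^\ell$ share a common literal neighbor; each $w_t$ is adjacent to all $u_i^\ell$). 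Whichever of these nine is $\sigma$-minimal, call it $z$; then every other vertex in the set is weakly 2-reachable from $z$ via a path through a common neighbor that comes after $z$ — giving $|\wreach_2(z,G_\sigma)| \ge 8 > 5$, unless $z$ is one of the $u_i^\ell$. Hence for every clause, all six $u_i^\ell$ precede all three associated literal vertices, and (using the $f_i,f'_i$ adjacencies and that $f_i,f'_i$ have no other neighbors) the earliest $u_i^\ell$ still 2-reaches the other five $u_i^\ell$ plus everything in its 2-neighborhood; so it already has five forced reachable vertices among literal vertices — and therefore \emph{cannot} have a sixth, which is exactly the constraint that will force satisfaction.

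The heart of the argument: fix a clause $c_i$ and let $u := u_i^1$ be the $\sigma$-earliest of $u_i^1,\dots,u_i^6$ (relabel if needed). The other five $u_i^\ell$ are weakly 2-reachable from $u$ (path $u - v_{j} - u_i^\ell$ or $u - f_i - u_i^\ell$ works since the intermediate vertex comes after $u$). If $c_i$ has three literals with corresponding literal vertices $w_1,w_2,w_3$, then each $w_t$ is adjacent to $u$, so $w_t \in \wreach_2(u,G_\sigma)$ as long as $w_t$ is not $<_\sigma u$ — but we argued $u <_\sigma w_t$. That already gives the five $u$'s plus three $w$'s, which is $8 > 5$; so something in my structural claim must be sharpened. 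The correct reading is that the six $u_i^\ell$ are NOT all forced before the literal vertices; rather, I should directly analyze $\wreach_2$ of the $\sigma$-minimal vertex $z$ among $\{u_i^1,\dots,u_i^6\} \cup \{w_1,w_2,w_3\}$, show $z = u_i^\ell$ for some $\ell$ is impossible unless at least one $w_t <_\sigma z$, and conclude that the earliest literal vertex among $w_1,w_2,w_3$ precedes the earliest $u_i^\ell$ — equivalently, for each clause there is a position $t$ with $w_t <_\sigma u_i^\ell$ for all $\ell$. Then I claim $w_t$'s partner $\bar w_t$ (the opposite-literal vertex for the same variable) does NOT precede $w_t$: if it did, then $\bar w_t <_\sigma u_i^\ell$ too, and considering $\wreach_2$ of the minimal $u_i^\ell$ we'd have all five other $u_i^\ell$ plus... hmm. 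I need to carefully recount so the threshold $5$ is tight, exploiting that a literal vertex appears in exactly two clauses (so the literal gadget vertices have bounded degree).

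The main obstacle I anticipate is exactly this tight counting argument: pinning down, for the $\sigma$-minimal clause vertex $u_i^\ell$ of each clause, precisely which vertices are forced into $\wreach_2(u_i^\ell,G_\sigma)$ and showing the count reaches $6$ unless the clause is satisfied — i.e.\ unless some literal vertex $w_t$ with $\bar w_t <_\sigma w_t$ (meaning the literal is true under $A$) is among the literals of $c_i$. I would organize it as: (1) show all $u_i^\ell$ come before all literal/filler vertices is FALSE in general, so instead (1') for each clause, the minimal vertex among the clause-vertices and literal-vertices is a clause vertex, forcing the five other clause vertices into its weak 2-reach and leaving room for exactly one more; (2) show that if all three literals of $c_i$ were false under $A$, then for each position $t$ both $w_t$ and $\bar w_t$ lie in the 2-neighborhood on the ``wrong side'' so that at least two additional vertices are weakly 2-reachable, contradiction; (3) handle the 2-literal clauses using $f_i,f'_i$ as two extra ``dummy'' clause vertices playing the role of the missing literal pair, so the same count (five forced $+$ the $f$'s) applies. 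Assembling (1')--(3) gives that $A$ satisfies $\varphi$, completing the reduction together with Lemma~\ref{lem:sat-wcol2}.
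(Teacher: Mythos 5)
Your overall strategy---extract the assignment from the relative order of $v_j$ and $v'_j$, then derive a contradiction from an unsatisfied clause by examining which of the relevant vertices comes earliest in $\sigma$---is the same as the paper's, but the proposal rests on a concretely false step that derails every count, and you never recover from it. You claim the $\sigma$-earliest clause vertex $u = u_i^\ell$ weakly 2-reaches the other five $u_i^{\ell'}$ via $u - v_j - u_i^{\ell'}$ ``since the intermediate vertex comes after $u$.'' This misreads Definition~\ref{def:reachable}: a weak 2-qualifying path requires the internal vertex to precede the \emph{endpoint}, i.e.\ $v_j <_\sigma u_i^{\ell'}$; its position relative to the start $u$ is irrelevant. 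In the order constructed in Lemma~\ref{lem:sat-wcol2} every literal vertex comes after every clause vertex, so there each $u_i^\ell$ weakly 2-reaches \emph{none} of its siblings. Your budget of ``five forced clause vertices plus room for exactly one more'' is therefore wrong; you even observe that it yields $8 > 5$ in every order (which would contradict Lemma~\ref{lem:sat-wcol2}), but you misdiagnose the culprit as the placement claim rather than the reachability claim, and the sketch trails off without ever producing a valid count. The treatment of the case where a literal vertex precedes $u_i^\ell$, and of the two-literal clauses via $f_i, f'_i$, inherits the same broken budget.

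The salvageable part of your step (1$'$)---a literal vertex of $c_i$ that precedes all six $u_i^\ell$ is adjacent to all of them and so has weak 2-reach at least $6$---is exactly one half of the paper's case analysis. The other half needs no sibling clause vertices at all: if $c_i = (x_1 \lor x_2 \lor x_3)$ is unsatisfied then $v_t <_\sigma v'_t$ for $t = 1,2,3$, and if the earliest clause vertex $u_i^\ell$ precedes $v_1, v_2, v_3$, then it weakly 2-reaches all six of $v_1, v_2, v_3, v'_1, v'_2, v'_3$---each $v'_t$ via the path $u_i^\ell - v_t - v'_t$, which qualifies precisely because $v_t <_\sigma v'_t$, i.e.\ because the literal is false. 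Either way some vertex has weak 2-reach at least $6$, contradicting $\wcol_2(G_\sigma) \leq 5$; the two-literal case substitutes $f_i, f'_i$ for the missing pair. As written, your plan does not reach the threshold of $6$ by a sound count and so does not close the proof.
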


\begin{proof}
	Let $G = G(\varphi)$, and let $\sigma$ be a total order of $G$ witnessing that $\wcol_2(G_\sigma) \leq 5$.
	Let $A$ be the following assignment of $x_1, \dots, x_n$: if $v_j <_\sigma v'_j$, set $x_j$ to false in $A$; otherwise, set $x_j$ to true.
	Assume that $A$ does not satisfy $\varphi$ to produce a contradiction.

	Let $c_i$ be a clause in $\varphi$ which is not satisfied by $A$.
	First, suppose that $c_i$ contains 3 literals.
	Without loss of generality, let $c_i = (x_1 \lor x_2 \lor x_3)$.
	Since $c_i$ is not satisfied by $A$, $v_1 <_\sigma v'_1$, $v_2 <_\sigma v'_2$, and $v_3 <_\sigma v'_3$ by construction.
	Let $u_i^\ell$ be the vertex from $u_i^1, \dots, u_i^6$ which appears first in $\sigma$.
	If $u_i^\ell <_\sigma v_1, v_2, v_3$, then $|\wreach_2(u_i^\ell, G_\sigma)| \geq 6$ since it can reach $v_1$, $v_2$, $v_3$, $v'_1$, $v'_2$, and $v'_3$.
	This contradicts that $\wcol_2(G_\sigma) \leq 5$, and so we may assume that at least one of $v_1$, $v_2$, or $v_3$ appears before $u_i^\ell$ in $\sigma$.
	Without loss of generality, assume $v_1 <_\sigma u_i^\ell$.
	By our choice of $u_i^\ell$ though, $v_1$ can reach all of $u_i^1, \dots, u_i^6$.
	Thus, $|\wreach_2(v_1, G_\sigma)| \geq 6$.
	This again contradicts that $\wcol_2(G_\sigma) \leq 5$.

	If instead $c_i$ only contains 2 literals, then $u_i^\ell$ reaches $f_i$ and $f'_i$ rather than $v_3$ and $v'_3$.
	Like $v_1$ and $v_2$, $f_i$ and $f'_i$ cannot come before $u_i^\ell$ without reaching all of $u_i^1, \dots, u_i^6$, contradicting that $\wcol_2(G_\sigma) \leq 5$.
	Therefore, $A$ must be a satisfying assignment for $\varphi$.
\end{proof}

Together, Lemmas~\ref{lem:sat-wcol2} and~\ref{lem:wcol2-sat} show that determining if $\wcol_2(G) \leq 5$ is NP-hard.
Since 5 is a constant, this proves the main result.

\begin{theorem} \label{thm:wcol2-hardness}
	\probname{Weak 2-Orderable} parameterized by the natural parameter $k$ is para-NP-hard.
\end{theorem}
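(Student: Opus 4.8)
The plan is to assemble Lemmas~\ref{lem:sat-wcol2} and~\ref{lem:wcol2-sat} into a single polynomial-time many-one reduction from \tcts to the fixed slice of \probname{Weak 2-Orderable} in which $k = 5$, and then invoke the definition of para-NP-hardness. Recall that a parameterized problem is \emph{para-NP-hard} precisely when it is already NP-hard for some fixed value of the parameter; so it suffices to exhibit one constant $k_0$ for which the unparameterized problem ``is $\wcol_2(G) \le k_0$?'' is NP-hard. We take $k_0 = 5$.

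First I would observe that the construction $\varphi \mapsto G(\varphi)$ of Definition~\ref{def:wcol2_redux} is computable in polynomial (indeed linear) time: for each of the $m$ clauses we introduce $6$ vertices, plus $2$ more if the clause has only two literals, and for each of the $n$ variables we introduce $2$ vertices, so $|V(G(\varphi))| = O(n+m)$; likewise each clause contributes $O(1)$ edges to each of its at most three literal vertices together with the $O(1)$ edges incident to $f_i, f'_i$, giving $|E(G(\varphi))| = O(n+m)$. The normalization assumptions on \tcts instances recorded in the preliminaries (each literal appearing exactly twice, no variable repeated in a clause, no unit clauses) are themselves enforced by a polynomial-time preprocessing step that adds only $O(1)$ fresh variables and clauses per offending literal, so they may be assumed without affecting polynomiality.

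Next I would combine the two lemmas: Lemma~\ref{lem:sat-wcol2} gives that if $\varphi$ is satisfiable then $\wcol_2(G(\varphi)) \le 5$, and Lemma~\ref{lem:wcol2-sat} gives the converse, so $\varphi$ is a yes-instance of \tcts if and only if $(G(\varphi), 5)$ is a yes-instance of \probname{Weak 2-Orderable}. Hence $\varphi \mapsto (G(\varphi), 5)$ is a polynomial-time many-one reduction. Since \tcts is NP-hard (established in the preliminaries via Theorem~2.1 of~\cite{tovey1984simplified}), deciding whether $\wcol_2(G) \le 5$ is NP-hard, and because $5$ is a constant this shows that \probname{Weak 2-Orderable} is NP-hard for a fixed value of the parameter $k$, i.e.\ para-NP-hard.

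I do not anticipate a genuine obstacle, since all of the combinatorial content lives in Lemmas~\ref{lem:sat-wcol2} and~\ref{lem:wcol2-sat}; the only points requiring care are (i) confirming the reduction runs in polynomial time, including that the \tcts preprocessing does not blow up the input, and (ii) phrasing the definition of para-NP-hardness carefully enough that ``NP-hard on the constant slice $k = 5$'' visibly implies it. If desired, the very same reduction also yields the promised constant-factor inapproximability for $\wcol_2$ (a yes-instance has optimum $\le 5$ while a no-instance has optimum $\ge 6$, so distinguishing them within a factor less than $6/5$ is NP-hard), though that is not needed for the theorem as stated.
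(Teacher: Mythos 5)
Your proposal is correct and follows the same route as the paper: the paper's proof of Theorem~\ref{thm:wcol2-hardness} simply observes that Lemmas~\ref{lem:sat-wcol2} and~\ref{lem:wcol2-sat} show NP-hardness of deciding $\wcol_2(G) \leq k$ already at the fixed slice $k = 5$, which is the definition of para-NP-hardness. Your additional checks on the polynomial-time computability of $G(\varphi)$ and the \tcts{} preprocessing are sound but are left implicit in the paper.
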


\begin{proof}
	Lemmas~\ref{lem:sat-wcol2} and~\ref{lem:wcol2-sat} imply that determining if $\wcol_2(G) \leq k$ is NP-hard even when $k = 5$.
\end{proof}

\begin{corollary} \label{cor:wcol2-apx-hardness}
	It is NP-hard to approximate \probname{Minimum Weak 2-Orderability} within a factor of $\frac{6}{5}$.
\end{corollary}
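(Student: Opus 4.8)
The plan is to upgrade the two directions of the reduction into a \emph{gap} result and then observe that a sufficiently good approximation algorithm would close that gap. First I would record that, by Lemma~\ref{lem:sat-wcol2}, if $\varphi$ is a satisfiable instance of \tcts then $\wcol_2(G(\varphi)) \le 5$, while the contrapositive of Lemma~\ref{lem:wcol2-sat} says that if $\varphi$ is unsatisfiable then $\wcol_2(G(\varphi)) \ge 6$. Since $\wcol_2$ is integer-valued, these two cases are mutually exclusive, so $G(\varphi)$ is a gap instance: its optimum is either at most $5$ (precisely when $\varphi$ is satisfiable) or at least $6$ (precisely when it is not). I would also note that $G(\varphi)$ is constructible in time polynomial in $|\varphi|$ directly from Definition~\ref{def:wcol2_redux}.

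Next I would suppose, toward a contradiction, that there is a polynomial-time algorithm $\mathcal{A}$ for \probname{Minimum Weak 2-Orderability} with approximation ratio $\rho < \tfrac{6}{5}$ (this is the precise sense in which the problem is hard to approximate within $\tfrac65$; equivalently, for every $\varepsilon>0$ it is NP-hard to approximate within $\tfrac65-\varepsilon$). Given an instance $\varphi$ of \tcts, build $G = G(\varphi)$, set $\mathrm{OPT} = \wcol_2(G)$, and run $\mathcal{A}$ on $G$ to obtain a value $A$ with $\mathrm{OPT} \le A \le \rho\cdot\mathrm{OPT}$. If $\varphi$ is satisfiable then $\mathrm{OPT}\le 5$, so $A \le 5\rho < 6$ and hence $A \le 5$; if $\varphi$ is unsatisfiable then $\mathrm{OPT}\ge 6$, so $A \ge 6$. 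Thus $A \le 5$ if and only if $\varphi$ is satisfiable, which decides \tcts in polynomial time, contradicting its NP-hardness (unless $\mathrm{P}=\mathrm{NP}$).

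I do not expect a substantive obstacle: all the real work lives in Lemmas~\ref{lem:sat-wcol2} and~\ref{lem:wcol2-sat}. The only point that needs care is the integrality/boundary argument — the gap lies between the two consecutive integers $5$ and $6$, so a ratio \emph{strictly} below $6/5$ is what separates the cases, and the corollary should be read accordingly. A minor secondary point is that $\mathcal{A}$, as an algorithm for the optimization problem, should be taken to return a value sandwiched between $\mathrm{OPT}$ and $\rho\cdot\mathrm{OPT}$; if it instead returns an order $\sigma$, simply let $A = \wcol_2(G_\sigma)$, which still satisfies $\mathrm{OPT}\le A\le \rho\cdot\mathrm{OPT}$ and makes the same argument go through.
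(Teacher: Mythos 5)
Your proposal is correct and is exactly the argument the paper intends: the corollary is stated without proof as the standard gap consequence of Lemmas~\ref{lem:sat-wcol2} and~\ref{lem:wcol2-sat}, which place the optimum at either at most $5$ or at least $6$, so any approximation with ratio strictly below $\tfrac{6}{5}$ would decide \tcts. Your remarks on the integrality of the gap and on extracting a value from a returned order are sensible clarifications, not deviations.
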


\begin{corollary} \label{cor:wcol2-xp-hardness}
	There does not exist an algorithm which can decide a given instance $(G, k)$ of \probname{Weak 2-Orderable} in $O(n^{f(k)})$ time unless P $=$ NP.
	Thus, \probname{Weak 2-Orderable} is not in XP.
\end{corollary}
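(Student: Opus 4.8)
The plan is to derive this immediately from the NP-hardness established by Lemmas~\ref{lem:sat-wcol2} and~\ref{lem:wcol2-sat}, exploiting the fact that the hard instances all use the single fixed value $k = 5$. Suppose, for contradiction, that some algorithm $\mathcal{A}$ decides every instance $(G, k)$ of \probname{Weak 2-Orderable} in time $O(n^{f(k)})$ for some function $f \colon \mathbb{N} \to \mathbb{N}$.

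First I would restrict attention to the slice of instances with $k = 5$. On such instances $\mathcal{A}$ runs in time $O(n^{f(5)})$, and since $f(5)$ is a fixed constant independent of the input, this is a polynomial bound $O(n^c)$ with $c := f(5)$. Hence $\mathcal{A}$ is a polynomial-time algorithm that correctly decides whether $\wcol_2(G) \leq 5$ for every graph $G$.

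Next I would invoke the reduction from \tcts underlying Theorem~\ref{thm:wcol2-hardness}: by Lemmas~\ref{lem:sat-wcol2} and~\ref{lem:wcol2-sat}, the graph $G(\varphi)$ of Definition~\ref{def:wcol2_redux} satisfies $\wcol_2(G(\varphi)) \leq 5$ if and only if $\varphi$ is satisfiable, and $G(\varphi)$ is computable from $\varphi$ in polynomial time. Since \tcts is NP-hard (as argued in the preliminaries), deciding ``$\wcol_2(G) \leq 5$'' is NP-hard. Composing this reduction with $\mathcal{A}$ gives a polynomial-time algorithm for \tcts, hence for \probname{SAT}, so P $=$ NP. Taking the contrapositive yields the claimed non-existence statement; and since an XP algorithm is by definition one with running time $O(n^{f(k)})$, it follows that \probname{Weak 2-Orderable} is not in XP unless P $=$ NP.

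I do not expect any genuine obstacle here: the argument is the standard observation that a para-NP-hard problem — one that is already NP-hard at a constant value of the parameter — cannot lie in XP unless P $=$ NP. The only point requiring care is to stress that $f$ need neither be computable nor known in advance; all that matters is that $f(5)$ is \emph{some} constant, which already collapses the $k = 5$ slice of any hypothetical XP algorithm to polynomial time.
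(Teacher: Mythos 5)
Your argument is correct and is exactly the standard reasoning the paper relies on (the corollary is stated without explicit proof, following immediately from Theorem~\ref{thm:wcol2-hardness}): an $O(n^{f(k)})$ algorithm restricted to the slice $k=5$ runs in polynomial time $O(n^{f(5)})$ and would decide the NP-hard question $\wcol_2(G) \leq 5$, forcing P $=$ NP. No gaps; the observation that $f$ need not be computable is a nice, if inessential, point of care.
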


\subsection{Weak $r$-Orderable is Para-NP-hard}
Having established the para-NP-hardness of the weak 2-coloring number, we now turn our attention to the weak $r$-coloring number for $r \geq 3$.
The construction is similar to the weak 2-coloring number, but includes subdivided edges to increase the distance between vertices.
An $\ell$-subdivided edge (alternatively: an edge subdivided $\ell$ times) is an induced path with $\ell$ internal vertices connecting the endpoints of the edge.
The vertices along the subdivided edge do not have additional neighbors outside of the path.
Note that a 0-subdivided edge is simply a normal edge.
We refer collectively to the vertices along the subdivided edges as \emph{subdivision vertices}.
Additional modifications handle the non-constant reach of vertices along subdivided edges.
One key adaptation is that we now reduce from \rsat.

\begin{definition} \label{def:wcol_redux}
	Given an instance $\varphi$ of \rsat, let $G(\varphi)$ be the following graph.
	For each clause $c_i \in \varphi$, we create $2r$ vertices $u_i^1, \dots, u_i^{2r}$ in $G(\varphi)$.
	Then for each variable $x_j$, we create 2 vertices $v_j$ and $v'_j$ (corresponding to the literals $x_j$ and $\overline{x}_j$ respectively) connected by an edge.
	For each variable $x_j$, add two $(r-2)$-subdivided edges from $v_j$ to each of $u_i^1, \dots, u_i^{2r}$ for each clause $c_i$ which contains $x_j$ as a literal.
	Likewise, add two $(r-2)$-subdivided edges from $v'_j$ to each of $u_i^1, \dots, u_i^{2r}$ for each clause $c_i$ which contains $\overline{x}_j$ as a literal.
\end{definition}

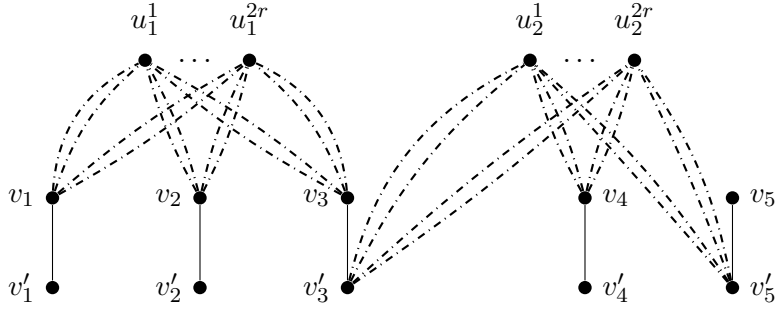
\begin{figure}
	\begin{center}
		\begin{tikzpicture}
	\tikzstyle{leafVert}=[circle,fill=black,minimum size=5pt,inner sep=0pt]

    \node[leafVert] (u11) {};
    \node[right=.2cm of u11] (dots1) {$\dots$};
    \node[leafVert, right=.2cm of dots1] (u16) {};
    \node[above=.1cm of u11] (l11) {$u_1^1$};
    \node[above=.1cm of u16] (l16) {$u_1^{2r}$};
	\node[fit=(u11)(u16)(l11)(l16)] (a1) {};

    \node[leafVert, right=3.5cm of u16] (u21) {};
    \node[right=.2cm of u21] (dots2) {$\dots$};
    \node[leafVert, right=.2cm of dots2] (u26) {};
    \node[above=.1cm of u21] (l21) {$u_2^1$};
    \node[above=.1cm of u26] (l26) {$u_2^{2r}$};
	\node[fit=(u21)(u26)(l21)(l26)] (a2) {};

	\node[leafVert, below=1.5cm of a1, label=left:{$v_2$}] (v2) {};
	\node[leafVert, below=1cm of v2, label=left:{$v'_2$}] (nv2) {};
	\draw (v2) -- (nv2);

    \node[leafVert, left=1.75cm of v2, label=left:{$v_1$}] (v1) {};
	\node[leafVert, below=1cm of v1, label=left:{$v'_1$}] (nv1) {};
	\draw (v1) -- (nv1);

    \node[leafVert, right=1.75cm of v2, label=left:{$v_3$}] (v3) {};
	\node[leafVert, below=1cm of v3, label=left:{$v'_3$}] (nv3) {};
	\draw (v3) -- (nv3);

    \node[leafVert, below=1.5cm of a2, label=right:{$v_4$}] (v4) {};
	\node[leafVert, below=1cm of v4, label=right:{$v'_4$}] (nv4) {};
	\draw (v4) -- (nv4);

	\node[leafVert, right=1.75cm of v4, label=right:{$v_5$}] (v5) {};
	\node[leafVert, below=1cm of v5, label=right:{$v'_5$}] (nv5) {};
	\draw (v5) -- (nv5);

	\draw[dash dot, thick] (v1) edge[bend left=30] (u11);
	\draw[dash dot, thick] (v1) edge[bend left=15] (u11);
	\draw[dash dot, thick] (v1) edge[bend left=6] (u16);
	\draw[dash dot, thick] (v1) edge[bend right=6] (u16);
	\draw[dash dot, thick] (v2) edge[bend left=7] (u11);
	\draw[dash dot, thick] (v2) edge[bend right=7] (u11);
	\draw[dash dot, thick] (v2) edge[bend left=7] (u16);
	\draw[dash dot, thick] (v2) edge[bend right=7] (u16);
	\draw[dash dot, thick] (v3) edge[bend left=6] (u11);
	\draw[dash dot, thick] (v3) edge[bend right=6] (u11);
	\draw[dash dot, thick] (v3) edge[bend right=30] (u16);
	\draw[dash dot, thick] (v3) edge[bend right=15] (u16);

	\draw[dash dot, thick] (nv3) edge[bend left=25] (u21);
	\draw[dash dot, thick] (nv3) edge[bend left=12] (u21);
	\draw[dash dot, thick] (nv3) edge[bend left=4] (u26);
	\draw[dash dot, thick] (nv3) edge[bend right=4] (u26);

	\draw[dash dot, thick] (v4) edge[bend left=7] (u21);
	\draw[dash dot, thick] (v4) edge[bend right=7] (u21);
	\draw[dash dot, thick] (v4) edge[bend left=7] (u26);
	\draw[dash dot, thick] (v4) edge[bend right=7] (u26);

	\draw[dash dot, thick] (nv5) edge[bend right=8] (u21);
	\draw[dash dot, thick] (nv5) edge[bend right=1] (u21);
	\draw[dash dot, thick] (nv5) edge[bend right=12] (u26);
	\draw[dash dot, thick] (nv5) edge[bend right=3] (u26);
\end{tikzpicture}
	\end{center}

	\caption{
		The subgraph of $G(\varphi)$ corresponding to the \probname{3-SAT} clauses $c_1 = (x_1 \lor x_2 \lor x_3)$ and $c_2 = (\overline{x}_3 \lor x_4 \lor \overline{x}_5)$.
		The dashed edges represent $(r-2)$-subdivided edges between $u_i^\ell$ and $v_j$ (or $v'_j$).
	}
	\label{fig:wcol_Gadget}
\end{figure}

See Figure~\ref{fig:wcol_Gadget} for an example of $G(\varphi)$.
Like the weak 2-coloring number reduction, the conversion between orders and assignments conflates the relative position of $v_j$ and $v'_j$ with the assignment of $x_j$.
If $v_j$ appears first, then $x_j$ is set to false, and if instead $v'_j$ appears first, then $x_j$ is set to true.
Thus, a clause vertex $u_i^\ell$ (which still must appear at the beginning of the order) has $r$-reach equal to $2r$ only if $c_i$ contains $r$ false literals in the converted assignment.
We prove that $\varphi$ is satisfiable if and only if $\wcol_r(G(\varphi)) \leq 2r-1$.
The forward direction uses a satisfying assignment of $\varphi$ to construct an order on $G(\varphi)$ with weak $r$-reach at most $2r-1$.

\begin{lemma} \label{lem:sat-wcol}
	Let $\varphi$ be an instance of \rsat and let $G(\varphi)$ be the graph produced by Definition~\ref{def:wcol_redux}.
	If $\varphi$ is satisfiable, then $\wcol_r(G(\varphi)) \leq 2r-1$.
\end{lemma}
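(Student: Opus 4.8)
The plan is to mirror the construction in the proof of Lemma~\ref{lem:sat-wcol2}: given a satisfying assignment $A$, build a total order $\sigma$ of $G = G(\varphi)$ in three consecutive blocks and then bound $|\wreach_r(x, G_\sigma)|$ for each vertex $x$. The blocks are (1) all subdivision vertices, (2) all clause vertices $u_i^1,\dots,u_i^{2r}$, and (3) all literal vertices, where for each variable $x_j$ we place $v'_j$ before $v_j$ if $x_j$ is true in $A$ and $v_j$ before $v'_j$ otherwise. The new ingredient compared to the radius-$2$ case is the internal order of block (1): order the subdivision vertices by their distance to the clause-vertex endpoint of the $(r-2)$-subdivided edge they lie on, putting the vertices adjacent to a literal vertex first and the vertices adjacent to a clause vertex last (ties broken arbitrarily). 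The three-block skeleton is essentially forced --- any clause vertex placed before one of its $2r$ subdivision neighbors already has weak $r$-reach at least $2r$, and a literal vertex placed before many of its subdivision neighbors (a variable may occur in many clauses) blows up as well --- so the real content is the sub-ordering of block (1) together with the choice inside block (3).

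The key step is bounding the weak $r$-reach of a clause vertex $u_i^\ell$. Every subdivision vertex and every clause vertex precedes $u_i^\ell$, so none of its neighbors is weakly $r$-reachable, and no other clause vertex is within distance $r$ (any two clause vertices are at distance at least $2(r-1) > r$ for $r \geq 3$). Within distance $r$ and after $u_i^\ell$ there are only: the $r$ literal vertices corresponding to the literals of $c_i$ (each at distance $r-1$ via a subdivided edge, reachable since all interior vertices are subdivision vertices, hence earlier than any literal vertex), and, for each literal of $c_i$, the literal vertex of its negation (at distance $r$, via the $v_jv'_j$ edge). The negation vertex is weakly $r$-reachable exactly when that literal is set to false, since the qualifying path passes through the literal's own vertex and that vertex must precede the negation vertex. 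As $A$ satisfies $c_i$, at most $r-1$ of its literals are false, so $|\wreach_r(u_i^\ell, G_\sigma)| \le r + (r-1) = 2r-1$.

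It then remains to check that all other vertices have weak $r$-reach at most $2r-1$, which they do with room to spare. A literal vertex $v_j$ (or $v'_j$) has all of its subdivision neighbors and all clause vertices before it, and every other literal vertex at distance more than $r$, so it can weakly $r$-reach only its partner $v'_j$ (and only when $x_j$ is false), giving reach at most $1$. A subdivision vertex at distance $d \ge 2$ from its clause endpoint can weakly $r$-reach only the $d-1$ subdivision vertices closer to that endpoint, the endpoint itself, and its literal endpoint together with that literal vertex's partner, for a total of at most $d + 2 \le r$; the sub-ordering of block (1) is what prevents it from continuing deeper into the clause vertex's other subdivided edges. The delicate case is a subdivision vertex $s$ adjacent to a clause vertex $u_i^\ell$: besides $u_i^\ell$ and (when the relevant variable is false) one partner vertex, $s$ can also reach, through $u_i^\ell$, the $r$ literal vertices of $c_i$ at distance exactly $r$, so $|\wreach_r(s,G_\sigma)| \le r + 2$, which is at most $2r-1$ precisely because $r \ge 3$. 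The main obstacle in writing this up cleanly is exactly this bookkeeping: one must rule out every ``unintended'' qualifying path of length at most $r$ --- in particular, show that from $s$ or from $u_i^\ell$ one cannot snake through $u_i^\ell$ into its other $(r-2)$-subdivided edges or through a literal vertex into its other edges within the length budget. This follows from the distance-$\ge 2r-2$ separation of clause vertices (and of a clause vertex from any non-incident literal vertex) and from the sub-phase ordering, which blocks the few remaining short detours; once these are dispatched, every vertex has weak $r$-reach at most $2r-1$, so $\wcol_r(G(\varphi)) \le 2r-1$.
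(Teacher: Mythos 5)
Your proposal is correct and follows essentially the same route as the paper: the same three-block order (subdivision vertices first, then clause vertices, then literal vertices ordered according to the satisfying assignment) and the same case analysis bounding the weak $r$-reach of each vertex type, with the clause-vertex bound $r + (r-1) = 2r-1$ coming from the satisfied literal exactly as in the paper. The one difference is that you impose a specific internal ordering on the subdivision vertices and present it as necessary; the paper places them in arbitrary order and still obtains the bound, because a subdivision vertex can never weakly $r$-reach a subdivision vertex on a different subdivided edge regardless of their relative order (any such path uses $u_i^\ell$ or $v_j$ as an internal vertex, and both come after every subdivision vertex), so your refinement is harmless but not needed.
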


\begin{proof}
	Let $G = G(\varphi)$, and let $A$ be an assignment of $x_1, \dots, x_n$ witnessing that $\varphi$ is satisfiable.
	Construct the total order $\sigma$ in the following manner.
	First, place the subdivision vertices in any order.
	For each clause $c_i$, add $u_i^1, \dots, u_i^{2r}$ to $\sigma$ in that order.
	Then, for each variable $x_j$, add $v'_j$ and then $v_j$ if $x_j$ is true in $A$.
	Otherwise, add $v_j$ and then $v'_j$.

	Consider a subdivision vertex $s$ along a subdivided edge between $u_i^\ell$ and $v_j$.
	Depending on the order, $s$ may be able to reach the other $r-3$ vertices along its subdivided edge.
	However, it cannot reach a subdivision vertex on a different subdivided edge since that would require going through $u_i^\ell$ or $v_j$, both of which appear after all subdivision vertices.
	The $r$-reach of $s$ contains only $u_i^\ell$ and no other clause vertices since it cannot use $v_j$ which appears after all clause vertices in $\sigma$.
	Finally, $s$ can reach $v_j$ and $v'_j$.
	If $s$ is adjacent to $u_i^\ell$, it can reach the $r-1$ vertices corresponding to the other literals in $c_i$.
	In total then, $|\wreach_r(s, G_\sigma)| \leq 2r-1$.

	Next, consider $\wreach_r(u_i^\ell, G_\sigma)$.
	Since edges between clause vertices and literal vertices are subdivided $r-2$ times and $r \geq 3$, the distance between $u_i^\ell$ and $u_p^q$ is greater than $r$ (even if $p = i$).
	Thus, $|\wreach_r(u_i^\ell, G_\sigma)| \leq 2r$ since there are only $2r$ vertices (corresponding to the $r$ literals in the clause and their negations) remaining in its $r$-neighborhood.
	Since $A$ is a satisfying assignment, at least one literal in $c_i$ must be set to true.
	Without loss of generality, assume $x_j \in c_i$ and $x_j$ is true.
	Thus, $v_j$ appears after $v'_j$ by the definition of $\sigma$.
	Then $v'_j \not\in \wreach_r(u_i^\ell, G_\sigma)$, and so $|\wreach_r(u_i^\ell, G_\sigma)| \leq 2r-1$.

	Finally, $|\wreach_r(v_j, G_\sigma)| \leq 1$.
	Although $v_j$ can reach $v'_j$ (or vice versa), every other vertex corresponding to a literal is at least distance $r+1$ away.
	Since the weak $r$-reach of every vertex is at most $2r-1$, $\wcol_r(G_\sigma) \leq 2r-1$.
\end{proof}

To complete the proof, the reverse direction argues that the assignment implied by an order $\sigma$ either satisfies $\varphi$, or there exists a subset of vertices (corresponding to an unsatisfied clause) which cannot have weak $r$-reach at most $2r-1$ in $\sigma$.

\begin{lemma} \label{lem:wcol-sat}
	Let $\varphi$ be an instance of \rsat and let $G(\varphi)$ be the graph produced by Definition~\ref{def:wcol_redux}.
	If $\wcol_r(G(\varphi)) \leq 2r-1$, then $\varphi$ has a satisfying assignment.
\end{lemma}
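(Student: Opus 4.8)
The plan is to mirror the proof of Lemma~\ref{lem:wcol2-sat}, with the two parallel $(r-2)$-subdivided edges playing the role of the direct edges between clause and literal vertices. Fix a total order $\sigma$ with $\wcol_r(G(\varphi)) \le 2r-1$, and let $A$ be the assignment that sets $x_j$ to false when $v_j <_\sigma v'_j$ and to true otherwise. Assume for contradiction that some clause is unsatisfied by $A$; after renaming, take $c_i = (x_1 \lor \cdots \lor x_r)$ with $v_j <_\sigma v'_j$ for every $1 \le j \le r$. (This already rules out any $v'_j$ with $x_j \in c_i$ being the earliest vertex of the $c_i$-gadget in $\sigma$.)

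Next I would record two facts about $G(\varphi)$ that do not depend on $\sigma$. First, since $r \ge 3$, any two clause vertices lie at distance at least $2(r-1) > r$ --- a shortest path between them must traverse two distinct $(r-2)$-subdivided edges --- so no clause vertex weakly $r$-reaches another clause vertex, and no literal vertex weakly $r$-reaches a literal vertex of a different clause. Second, for a clause vertex $u$ of $c_i$ and a variable $x_j \in c_i$, the only $u$-$v_j$ paths of length at most $r$ are the two parallel subdivided edges (each of length $r-1$), and the only $u$-$v'_j$ paths of length at most $r$ extend one of these by the matching edge $v_jv'_j$ (total length $r$). Calling a parallel $u$-$v_j$ edge \emph{clean} when all its subdivision vertices precede $v_j$ in $\sigma$, it follows that if $u <_\sigma v_j$ and some parallel $u$-$v_j$ edge is clean then both $v_j$ and $v'_j$ lie in $\wreach_r(u, G_\sigma)$ (for $v'_j$ one uses $v_j <_\sigma v'_j$ to orient the last internal vertex), and symmetrically with the roles of $u$ and $v_j$ swapped.

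Now let $u = u_i^\ell$ be the clause vertex of $c_i$ earliest in $\sigma$, and split into cases. If for every $j$ we have $u <_\sigma v_j$ and some parallel $u$-$v_j$ edge is clean, then the second fact gives $\{v_1, v'_1, \dots, v_r, v'_r\} \subseteq \wreach_r(u, G_\sigma)$, so $|\wreach_r(u, G_\sigma)| \ge 2r$, contradicting $\wcol_r(G_\sigma) \le 2r-1$. Otherwise some index $j$ satisfies (a) $v_j <_\sigma u$, or (b) $u <_\sigma v_j$ but neither parallel $u$-$v_j$ edge is clean, i.e.\ each of them contains a subdivision vertex coming after $v_j$. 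In case (a), since $u$ is the earliest clause vertex of $c_i$ we get $v_j <_\sigma u_i^q$, hence $u_i^q \not<_\sigma v_j$, for every $q$; applying the dichotomy with clause and literal vertices exchanged then either shows $\{u_i^1, \dots, u_i^{2r}\} \subseteq \wreach_r(v_j, G_\sigma)$ (a contradiction) or produces a configuration of type~(b). In case (b), choose a subdivision vertex $s$ on a parallel $u$-$v_j$ edge with $v_j <_\sigma s$; then $u <_\sigma v_j <_\sigma s$, so both gateway vertices precede $s$, and from $s$ one can route through $u$ into the $2r-1$ other subdivided edges incident to $u$ and through $v_j$ to $v'_j$, onto the parallel edge, and toward $u_i^1, \dots, u_i^{2r}$. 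Counting the vertices in the $r$-neighborhood of $s$ that are thereby forced to lie after $s$ in $\sigma$ should again exhibit at least $2r$ weakly $r$-reachable vertices, a contradiction. Hence no unsatisfied clause exists, and $A$ satisfies $\varphi$.

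The step I expect to be the main obstacle is the counting in case (b) (and the sub-case of~(a) that reduces to it): this is precisely why Definition~\ref{def:wcol_redux} places \emph{two} parallel $(r-2)$-subdivided edges on each incidence rather than one, so that a single out-of-place subdivision vertex cannot go unpunished. The delicate points I anticipate are (i) choosing \emph{which} subdivision vertex $s$ to charge the contradiction against --- probably an extremal one in $\sigma$ rather than an arbitrary obstruction --- so that the internal-vertex ordering conditions along the routing paths are met automatically, and (ii) verifying the distance inequalities (of the shape $d + c \le r$ or $(r-1-d) + (r-1-c) \le r$, whose minimum is always at most $r-1$) that place the target vertices inside the $r$-neighborhood of $s$. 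Everything outside this subdivision-vertex bookkeeping --- the definition of $A$, the reduction to a single unsatisfied clause, and the two ``clean edge yields $2r$ reachable vertices'' calculations --- transcribes directly from the proof of Lemma~\ref{lem:wcol2-sat}.
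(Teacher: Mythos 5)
Your setup (the assignment $A$, the reduction to a single unsatisfied clause $c_i=(x_1\lor\dots\lor x_r)$, the choice of the earliest clause vertex $u=u_i^\ell$, and the ``clean edge'' computation showing $\{v_1,v'_1,\dots,v_r,v'_r\}\subseteq\wreach_r(u,G_\sigma)$) matches the paper, and your preliminary distance facts are correct. The gap is exactly where you predicted it: case~(b). There you try to derive the contradiction from the weak $r$-reach \emph{of the out-of-place subdivision vertex $s$}, arguing that vertices in the $r$-neighborhood of $s$ are ``forced to lie after $s$.'' Nothing forces this. Weak reachability from $s$ requires the \emph{targets} to come after $s$ and the intermediates to precede each target, and an adversarial order can simply place every dirty subdivision vertex at the very end of $\sigma$; then $\wreach_r(s,G_\sigma)$ is essentially empty (for $r=3$, with both parallel edges $u\hyph s\hyph v_j$ and $u\hyph s'\hyph v_j$ and $s,s'$ occupying the last two positions, $s$ weakly reaches at most $s'$). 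So no choice of $s$, extremal or otherwise, yields $|\wreach_r(s,G_\sigma)|\ge 2r$, and the same defect afflicts the sub-case of~(a) that you route into~(b).

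The correct move --- and the one the paper makes --- is to keep charging the contradiction to $u$ (respectively to $v_j$ in the symmetric case) rather than to $s$. On a dirty parallel edge $u=p_0,p_1,\dots,p_{r-1}=v_j$ with $u<_\sigma v_j$, let $t^*$ be minimal with $p_{t^*}\not<_\sigma v_j$; then $p_1,\dots,p_{t^*-1}<_\sigma v_j<_\sigma p_{t^*}$, so $p_{t^*}$ is itself weakly $r$-reachable from $u$ along a path of length $t^*\le r-2$. Hence each index $j$ contributes two distinct vertices to $\wreach_r(u,G_\sigma)$ in every case: either $v_j$ and $v'_j$ (some parallel edge clean), or one such $p_{t^*}$ from each of the two parallel edges (both dirty), or $v_j$, $v'_j$, and an extra blocker (one clean, one dirty). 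Summing over the $r$ indices gives $|\wreach_r(u,G_\sigma)|\ge 2r$; in the symmetric case the accounting is even more favorable, since blocking a single clause vertex $u_i^q$ from $\wreach_r(v_j,G_\sigma)$ costs two blockers while removing only one target. This is precisely why the construction uses \emph{two} parallel $(r-2)$-subdivided edges, as you correctly guessed --- but the doubling pays off in the reach of the endpoint $u$, not in the reach of $s$.
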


\begin{proof}
	Let $G = G(\varphi)$, and let $\sigma$ be a total order of $G$ witnessing that $\wcol_r(G_\sigma) \leq 2r-1$.
	Let $A$ be the following assignment of $x_1, \dots, x_n$: if $v_j <_\sigma v'_j$, set $x_j$ to false in $A$; otherwise, set $x_j$ to true.
	Assume that $A$ does not satisfy $\varphi$ to produce a contradiction.

	Let $c_i$ be a clause in $\varphi$ which is not satisfied by $A$.
	Without loss of generality, let $c_i = (x_1 \lor \dots \lor x_r)$.
	Since $c_i$ is not satisfied by $A$, $v_j <_\sigma v'_j$ for $1 \leq j \leq r$.
	Let $u_i^\ell$ be the vertex from $u_i^1, \dots, u_i^{2r}$ which appears first in $\sigma$.
	If $u_i^\ell <_\sigma v_1, \dots, v_r$, then $|\wreach_r(u_i^\ell, G_\sigma)| \geq 2r$ since it can reach $v_1, \dots v_r$ and $v'_1, \dots, v'_r$.
	Reaching $v_j$ and $v'_j$ can only be avoided by placing a subdivision vertex $s$ from the subdivided edge between $u_i^\ell$ and $v_j$ after $v'_j$ in $\sigma$.
	However, since there are 2 subdivided edges between $u_i^\ell$ and $v_j$, this results in the same total reach.
	This contradicts that $\wcol_r(G_\sigma) \leq 2r-1$, and so we may assume that at least one of $v_1, \dots, v_r$ appears before $u_i^\ell$ in $\sigma$.
	Without loss of generality, assume $v_1 <_\sigma u_i^\ell$.
	By our choice of $u_i^\ell$, $v_1$ can reach all of $u_i^1, \dots, u_i^{2r}$.
	Again, this can be avoided by placing 2 subdivision vertices after $u_i^\ell$ in $\sigma$, but this yields an even larger reach.
	Thus, $|\wreach_r(v_1, G_\sigma)| \geq 2r$.
	This again contradicts that $\wcol_r(G_\sigma) \leq 2r-1$.
	Therefore, $A$ must be a satisfying assignment for $\varphi$.
\end{proof}

Together, Lemmas~\ref{lem:sat-wcol} and~\ref{lem:wcol-sat} show that determining $\wcol_r(G) \leq 2r-1$ is NP-hard.
Since $2r-1$ is assumed to be a constant, this proves the main result.

\begin{theorem} \label{thm:wcol-hardness}
	\probname{Weak $r$-Orderable} parameterized by the natural parameter $k$ is para-NP-hard.
\end{theorem}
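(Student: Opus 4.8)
The plan is to show that \probname{Weak $r$-Orderable} parameterized by the natural parameter $k$ is para-NP-hard by exhibiting, for each fixed $r \geq 2$, a value of $k$ for which the decision problem ``is $\wcol_r(G) \leq k$?'' is NP-hard. By the definition of para-NP-hardness, it suffices to fix the parameter to a constant and obtain NP-hardness of the resulting (unparameterized) problem. Since the cases $r = 2$ and $r \geq 3$ have been handled by separate reductions, I would split into these two regimes and invoke the lemmas already established.

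First, for $r = 2$, I would appeal directly to Theorem~\ref{thm:wcol2-hardness}: Lemmas~\ref{lem:sat-wcol2} and~\ref{lem:wcol2-sat} together give a polynomial-time reduction from \tcts (which is NP-hard) to deciding whether $\wcol_2(G) \leq 5$, so the parameter value $k = 5$ already witnesses para-NP-hardness. Second, for $r \geq 3$, I would use Lemmas~\ref{lem:sat-wcol} and~\ref{lem:wcol-sat}: together they establish that for each fixed $r \geq 3$, the graph $G(\varphi)$ of Definition~\ref{def:wcol_redux} satisfies $\wcol_r(G(\varphi)) \leq 2r-1$ if and only if the \rsat instance $\varphi$ is satisfiable. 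Since \rsat is NP-hard for $r \geq 3$ (by Karp's NP-completeness of \probname{3-SAT}, noted in the preliminaries) and Definition~\ref{def:wcol_redux} is clearly computable in polynomial time in the size of $\varphi$, this gives NP-hardness of deciding whether $\wcol_r(G) \leq 2r-1$. For each fixed $r$, the bound $2r-1$ is a constant, so this witnesses para-NP-hardness at parameter value $k = 2r-1$.

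Combining the two cases, for every $r \geq 2$ there is a constant $k_r$ (namely $k_2 = 5$ and $k_r = 2r-1$ for $r \geq 3$) such that deciding $\wcol_r(G) \leq k_r$ is NP-hard; equivalently, the slice of \probname{Weak $r$-Orderable} with parameter fixed to $k_r$ is NP-hard, which is precisely para-NP-hardness. I do not anticipate any real obstacle here, since all the technical content lives in the four lemmas already proved; the only mild care needed is to state explicitly that a single fixed constant slice suffices for para-NP-hardness and to confirm that the reductions run in polynomial time — the latter is immediate because $G(\varphi)$ has $O(|\varphi|)$ clause vertices, $O(n)$ literal vertices, and $O(r|\varphi|)$ subdivision vertices, all constructible in polynomial time for fixed $r$. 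One could also remark, in parallel with Corollaries~\ref{cor:wcol2-apx-hardness} and~\ref{cor:wcol2-xp-hardness}, that this rules out an XP algorithm and yields a constant-factor inapproximability bound of $\frac{2r}{2r-1}$ for \probname{Minimum Weak $r$-Orderability}, though strictly speaking that is a corollary rather than part of the theorem's proof.
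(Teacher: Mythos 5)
Your proposal is correct and matches the paper's proof, which likewise just observes that Lemmas~\ref{lem:sat-wcol} and~\ref{lem:wcol-sat} make the fixed slice $k = 2r-1$ NP-hard (with the $r=2$ case already covered by Theorem~\ref{thm:wcol2-hardness}). Your added remarks on polynomial-time constructibility of $G(\varphi)$ are sound but not needed beyond what the paper states.
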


\begin{proof}
	Lemmas~\ref{lem:sat-wcol} and~\ref{lem:wcol-sat} imply that determining if $\wcol_r(G) \leq k$ is NP-hard even when $k = 2r-1$.
\end{proof}

\begin{corollary} \label{cor:wcol-apx-hardness}
	It is NP-hard to approximate \probname{Minimum Weak $r$-Orderability} within a factor of $\frac{2r}{2r-1}$.
\end{corollary}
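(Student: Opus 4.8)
The plan is to convert the exact-decision hardness already established into an inapproximability gap in the standard way. The key observation is that Lemmas~\ref{lem:sat-wcol} and~\ref{lem:wcol-sat} do more than show that testing $\wcol_r(G) \le 2r-1$ is NP-hard: together with the fact that $\wcol_r$ is integer-valued, they exhibit a \emph{promise gap}. Namely, for an instance $\varphi$ of \rsat (with $r \ge 3$, so that \rsat is NP-hard), the graph $G(\varphi)$ of Definition~\ref{def:wcol_redux} satisfies $\wcol_r(G(\varphi)) \le 2r-1$ when $\varphi$ is satisfiable (Lemma~\ref{lem:sat-wcol}), and $\wcol_r(G(\varphi)) \ge 2r$ when $\varphi$ is unsatisfiable, since by the contrapositive of Lemma~\ref{lem:wcol-sat} the weak $r$-coloring number exceeds $2r-1$ and is an integer. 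Moreover $G(\varphi)$ has size polynomial in $|\varphi|$: there is one bundle of $(r-2)$-subdivided edges per clause--literal incidence, and each such edge contributes only $r-2$ internal vertices, so $|V(G(\varphi))|$ is polynomially bounded and $G(\varphi)$ is computable in polynomial time.

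Then I would argue by contradiction. Suppose there is a polynomial-time approximation algorithm $\mathcal{A}$ for \probname{Minimum Weak $r$-Orderability} whose output $\mathcal{A}(G)$ satisfies $\wcol_r(G) \le \mathcal{A}(G) < \tfrac{2r}{2r-1}\cdot\wcol_r(G)$ on every graph $G$. Given $\varphi$, construct $G(\varphi)$, run $\mathcal{A}$ on it, and report ``satisfiable'' if and only if $\mathcal{A}(G(\varphi)) < 2r$. If $\varphi$ is satisfiable, then $\wcol_r(G(\varphi)) \le 2r-1$, so $\mathcal{A}(G(\varphi)) < \tfrac{2r}{2r-1}(2r-1) = 2r$; if $\varphi$ is unsatisfiable, then $\mathcal{A}(G(\varphi)) \ge \wcol_r(G(\varphi)) \ge 2r$. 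Hence this procedure decides \rsat in polynomial time, and since \rsat is NP-hard for $r \ge 3$, this forces P $=$ NP. (The $r = 2$ case, with its sharper constant $\tfrac{6}{5}$, is exactly Corollary~\ref{cor:wcol2-apx-hardness}, obtained the same way from Lemmas~\ref{lem:sat-wcol2} and~\ref{lem:wcol2-sat}.)

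There is essentially no technical obstacle here; all the work is in the two lemmas already proved, and the remainder is a routine gap-to-inapproximability reduction. The only points needing minor care are (i) recording explicitly that the reduction of Definition~\ref{def:wcol_redux} runs in polynomial time and produces a polynomial-size instance, so that composing it with $\mathcal{A}$ remains polynomial; and (ii) placing the strict inequality correctly --- a guarantee of ratio \emph{exactly} $\tfrac{2r}{2r-1}$ would only ensure $\mathcal{A}(G(\varphi)) \le 2r$ on satisfiable instances and so would fail to separate the two cases. Thus the statement should be read, as is conventional for minimization problems, as ruling out every approximation ratio strictly below $\tfrac{2r}{2r-1}$, equivalently as the NP-hardness of the $(2r-1,\,2r)$-gap version of the problem, which is precisely what Lemmas~\ref{lem:sat-wcol} and~\ref{lem:wcol-sat} deliver.
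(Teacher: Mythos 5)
Your proposal is correct and is exactly the argument the paper intends: the corollary is stated without proof as an immediate consequence of Lemmas~\ref{lem:sat-wcol} and~\ref{lem:wcol-sat}, which establish the $(2r-1,\,2r)$ gap via the polynomial-time construction of $G(\varphi)$, and your reduction from that gap to inapproximability is the standard one. Your side remark about reading the factor $\tfrac{2r}{2r-1}$ as ruling out ratios strictly below it is a reasonable clarification of the statement's intent.
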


\begin{corollary} \label{cor:wcol-xp-hardness}
	For any fixed $r$, there does not exist an algorithm which can decide a given instance $(G, k)$ of \probname{Weak $r$-Orderable} in $O(n^{f(k)})$ time unless P = NP.
	Thus, \probname{Weak $r$-Orderable} is not in XP.
\end{corollary}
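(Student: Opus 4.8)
The plan is to obtain this as an immediate consequence of the para-NP-hardness proved in Theorem~\ref{thm:wcol-hardness}, via the standard argument that a problem cannot be NP-hard at a fixed parameter value and simultaneously admit an XP algorithm unless P $=$ NP. Suppose toward a contradiction that for the fixed radius $r$ there is an algorithm $\mathcal{A}$ solving every instance $(G, k)$ of \probname{Weak $r$-Orderable} in time $O(n^{f(k)})$ for some computable $f$.

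First I would recall precisely what Lemmas~\ref{lem:sat-wcol} and~\ref{lem:wcol-sat} give: for fixed $r$, the assignment $\varphi \mapsto (G(\varphi), 2r-1)$ from Definition~\ref{def:wcol_redux} is a polynomial-time many-one reduction from \rsat to \probname{Weak $r$-Orderable} whose output always has $k = 2r-1$. Indeed $G(\varphi)$ has only $O(n+m)$ clause, literal, and subdivision vertices and is clearly constructible in polynomial time when $r$ is a constant, and the two lemmas show $\varphi$ is satisfiable iff $\wcol_r(G(\varphi)) \le 2r-1$. Since \rsat is NP-hard for $r \ge 3$ (and the relevant SAT variants are NP-hard for smaller $r$ as discussed in Section~2), this restriction of \probname{Weak $r$-Orderable} to instances with $k = 2r-1$ is NP-hard.

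Next, because $r$ is fixed, $2r-1$ is a constant, hence $f(2r-1)$ is a constant $c$, and running $\mathcal{A}$ on inputs of the form $(G, 2r-1)$ takes time $O(n^{c})$, i.e.\ polynomial in $|G|$. Composing the polynomial-time reduction above with this polynomial-time procedure decides \rsat in polynomial time, forcing P $=$ NP. The final sentence of the corollary follows by unwinding definitions: an XP algorithm for \probname{Weak $r$-Orderable} parameterized by $k$ is by definition one with running time $O(n^{f(k)})$, and we have just shown none can exist unless P $=$ NP.

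I expect no genuine obstacle here. The only point requiring care is that the hardness in Theorem~\ref{thm:wcol-hardness} is attained at a value of $k$ that is \emph{constant} for fixed $r$ (namely $2r-1$) and that the underlying reduction runs in polynomial time — both already in hand from the cited lemmas and the construction in Definition~\ref{def:wcol_redux}. One could alternatively invoke the general principle that para-NP-hardness rules out XP algorithms, but spelling out the contradiction as above keeps the proof self-contained.
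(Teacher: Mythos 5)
Your argument is correct and matches the paper's (implicit) reasoning: the corollary is stated without proof there precisely because it follows from Theorem~\ref{thm:wcol-hardness} by the standard observation that an $O(n^{f(k)})$ algorithm run at the fixed value $k = 2r-1$ would decide an NP-hard problem in polynomial time. Spelling out the contradiction as you do is exactly the intended justification.
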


\subsection{$r$-Orderable is Para-NP-Hard}
Finally, we prove that \probname{$r$-Orderable} parameterized by the natural parameter $k$ is para-NP-hard.
Our proof reduces from \tcts{} and shows that determining if $\col_r(G) \leq 6$ is NP-hard.
First, we describe the reduction from the \tcts{} instance $\varphi$ to the graph $G(\varphi)$.

\begin{definition} \label{def:col_redux}
    Given an instance $\varphi$ of \tcts{}, let $G(\varphi)$ be the following graph.
    For each clause $c_i \in \varphi$, create a \emph{clause vertex} $u_i$ in $G(\varphi)$.
    For each variable $x_j$, create 2 \emph{literal vertices} $v_j$ and $v'_j$ (corresponding to the literals $x_j$ and $\overline{x}_j$ respectively) and connect them by an edge.
    Additionally, for each clause $c_i$ which contains the literal $x_j$ (or $\overline{x}_j$), connect $u_i$ and $v_j$ (or $v'_j$) with an $(r-1)$-subdivided edge.
	To complete $G(\varphi)$, create a 7-clique consisting of the \emph{clique vertices} $w_1, \dots, w_7$, and connect it to the remainder of the graph in the following manner.
	For every clause $c_i$, add edges from $u_i$ to $w_1, \dots, w_4$.
	If $c_i$ only contains 2 literals, add an additional edge from $u_i$ to $w_5$.
	For every variable $x_j$, add edges from $v_j$ to $w_2, \dots, w_4$ and from $v'_j$ to $w_5, \dots, w_7$.
\end{definition}

Figure~\ref{fig:col_gadget} gives an example of $G(\varphi)$ for a small \tcts{} instance.
In the conversion between orders and assignments, the relative position of $v_j$ and $v'_j$ corresponds to the assignment of $x_j$.
If $v_j$ appears first, then $x_j$ is set to true, and if instead $v'_j$ appears first, then $x_j$ is set to false.
Informally, the reduction works by forcing the clause vertices to appear between the true and false literal vertices in optimal orders.
As a result, a clause vertex has reach 7 unless it is satisfied (3 false literal vertices plus 4 clique vertices).
Formally, we prove that $\varphi$ is satisfiable if and only if $\col_r(G(\varphi)) \leq 6$.
The forward direction uses a satisfying assignment of $\varphi$ to construct an order on $G(\varphi)$ with $r$-reach at most 6.

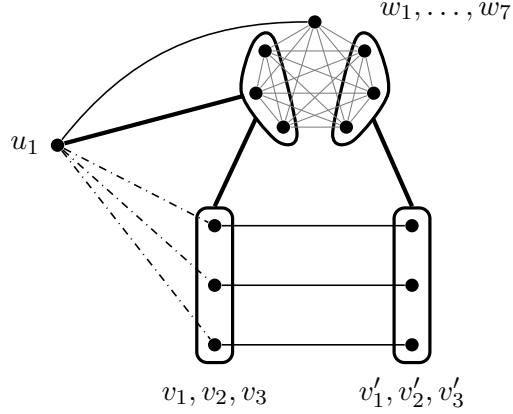
\begin{figure}[t]
	\begin{center}
		\begin{tikzpicture}
	\tikzstyle{leafVert}=[circle,fill=black,minimum size=5pt,inner sep=0pt]

    \node[leafVert, label=left:{$u_1$}] (u1) {};

	\node[leafVert, below right=1cm and 2cm of u1.center] (v1) {};
	\node[leafVert, below=0.6cm of v1] (v2) {};
	\node[leafVert, below=0.6cm of v2] (v3) {};
	\draw[dash dot, semithick] (u1) -- (v1);
	\draw[dash dot, semithick] (u1) -- (v2);
	\draw[dash dot, semithick] (u1) -- (v3);
	\node[fit=(v1)(v2)(v3), draw, very thick, rounded corners, inner sep=4pt] (v_pos) {};

	\node[leafVert, right=2.5cm of v1.center] (vp1) {};
	\node[leafVert, right=2.5cm of v2.center] (vp2) {};
	\node[leafVert, right=2.5cm of v3.center] (vp3) {};
	\draw[semithick] (v1) -- (vp1);
	\draw[semithick] (v2) -- (vp2);
	\draw[semithick] (v3) -- (vp3);
	\node[fit=(vp1)(vp2)(vp3), draw, very thick, rounded corners, inner sep=4pt] (v_neg) {};

	\node[leafVert, above right=1.5cm and 3.25cm of u1] (w1) {};
	\node[leafVert, below left=0.25cm and 0.52cm of w1] (w2) {};
	\node[leafVert, below left=0.81cm and 0.64cm of w1] (w3) {};
	\node[leafVert, below left=1.26cm and 0.28cm of w1] (w4) {};
	\node[leafVert, below right=0.25cm and 0.52cm of w1] (w5) {};
	\node[leafVert, below right=0.81cm and 0.64cm of w1] (w6) {};
	\node[leafVert, below right=1.26cm and 0.28cm of w1] (w7) {};

	\node[fit=(w2)(w3)(w4), inner sep=1.5pt] (w_pos1) {};
	\node[fit=(w2)(w3)(w4), inner sep=-3pt] (w_pos2) {};
	\node[fit=(w5)(w6)(w7), inner sep=-3pt] (w_neg) {};


	\draw[very thick] plot [smooth cycle] coordinates {
		($(w2.center) + (0.12, 0.15)$)
		($(w2.center) + (-0.15, 0.12)$)
		($(w3.center) + (-0.19, 0)$)
		($(w4.center) + (-0.09, -0.18)$)
		($(w4.center) + (0.18, -0.09)$)
	};

	\draw[very thick] plot [smooth cycle] coordinates {
		($(w5.center) + (-0.12, 0.15)$)
		($(w5.center) + (0.15, 0.12)$)
		($(w6.center) + (0.19, 0)$)
		($(w7.center) + (0.09, -0.18)$)
		($(w7.center) + (-0.18, -0.09)$)
	};

	\draw (u1) edge[bend left=26, semithick] (w1);
	\draw[ultra thick] (u1) -- (w_pos1);
	\draw[ultra thick] (v_pos.north) -- (w_pos2);
	\draw[ultra thick] (v_neg.north) -- (w_neg);

	\draw[gray] (w1) -- (w2);
	\draw[gray] (w1) -- (w3);
	\draw[gray] (w1) -- (w4);
	\draw[gray] (w1) -- (w5);
	\draw[gray] (w1) -- (w6);
	\draw[gray] (w1) -- (w7);
	\draw[gray] (w2) -- (w3);
	\draw[gray] (w2) -- (w4);
	\draw[gray] (w2) -- (w5);
	\draw[gray] (w2) -- (w6);
	\draw[gray] (w2) -- (w7);
	\draw[gray] (w3) -- (w4);
	\draw[gray] (w3) -- (w5);
	\draw[gray] (w3) -- (w6);
	\draw[gray] (w3) -- (w7);
	\draw[gray] (w4) -- (w5);
	\draw[gray] (w4) -- (w6);
	\draw[gray] (w4) -- (w7);
	\draw[gray] (w5) -- (w6);
	\draw[gray] (w5) -- (w7);
	\draw[gray] (w6) -- (w7);

	\node[above right=0.2cm and 0cm of w5] {$w_1, \dots, w_7$};
	\node[below=0.2cm of vp3] (vplab) {$v'_1, v'_2, v'_3$};
	\node[left=2.59cm of vplab.text, anchor=text] {$v_1, v_2, v_3$};
\end{tikzpicture}
	\end{center}

	\caption{
		The subgraph of $G(\varphi)$ corresponding to the clause $c_1 = (x_1 \lor x_2 \lor x_3)$.
		The dashed edges represent $(r-1)$-subdivided edges between clause vertices and literal vertices.
		The bold edges indicate that every possible edge is present between the two sets.
	}
	\label{fig:col_gadget}
\end{figure}

\begin{lemma} \label{lem:sat-col}
	Let $\varphi$ be an instance of \tcts{}, and let $G(\varphi)$ be the graph defined by Definition~\ref{def:col_redux}.
	If $\varphi$ is satisfiable, then $\col_r(G(\varphi)) \leq 6$.
\end{lemma}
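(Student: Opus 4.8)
The plan is to turn a satisfying assignment $A$ of $\varphi$ into a single total order $\sigma$ of $V(G(\varphi))$ in which no vertex has $r$-reach exceeding $6$. I would lay out $\sigma$ in five consecutive blocks: first all subdivision vertices, in any order; second the \emph{true literal vertices}, meaning $v_j$ for every $x_j$ true in $A$ together with $v'_j$ for every $x_j$ false in $A$; third all clause vertices $u_i$, in any order; fourth the remaining \emph{false literal vertices}; and fifth the clique vertices in the order $w_1, w_2, \dots, w_7$. The rest of the proof is a case analysis bounding $|\reach_r(z, G_\sigma)|$ over the five kinds of vertices $z$.

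The three easy kinds come first. A subdivision vertex $s$ sits on a single $(r-1)$-subdivided path between some $u_i$ and some literal vertex $\ell$; since all neighbours of $s$ lie on that path and both $u_i$ and $\ell$ occur after every subdivision vertex, neither endpoint can be an internal vertex of a qualifying path out of $s$, so $\reach_r(s)$ is confined to that path and in fact contains at most one vertex on each side of $s$, giving $|\reach_r(s)| \leq 2$. A clique vertex $w_i$ has all of its non-clique neighbours strictly before it in $\sigma$, because those neighbours are clause vertices (block three) and literal vertices (blocks two and four), all of which precede the clique; hence none is reached or usable as an internal vertex, so $\reach_r(w_i) = \{w_{i+1}, \dots, w_7\}$ and $|\reach_r(w_i)| = 7 - i \leq 6$, with equality only at $w_1$. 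A false literal vertex, say $v_j$ with $x_j$ false, has $v'_j$, all incident subdivision vertices, and every clause vertex $u_i$ with $x_j \in c_i$ all before it, and its only forward neighbours are $w_2, w_3, w_4$ (which cannot be internal either), so $|\reach_r(v_j)| = 3$; the case of $v'_j$ with $x_j$ true is symmetric.

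The two remaining kinds carry the argument. For a true literal vertex $v_j$ with $x_j$ true: its forward neighbours are $v'_j, w_2, w_3, w_4$, and for each of the exactly two clauses $c_i$ containing the literal $x_j$ the $v_j$--$u_i$ path has all $r-1$ internal vertices before $v_j$ and $u_i$ after $v_j$ at distance $r$, so $u_i \in \reach_r(v_j)$; nothing else is within distance $r$ along a qualifying path, so $|\reach_r(v_j)| = 4 + 2 = 6$, where I use the defining property of \tcts that each literal occurs in exactly two clauses. For a clause vertex $u_i$: every subdivision vertex and every true literal vertex precedes $u_i$ and cannot be internal, and the only $u_i$--$u_p$ paths of length at most $r$ pass through clique vertices, which follow $u_i$ and cannot be internal; hence $\reach_r(u_i)$ consists only of the clique neighbours of $u_i$ (namely $w_1, \dots, w_4$, plus $w_5$ when $|c_i| = 2$) together with those literal vertices joined to $u_i$ by a subdivided edge that are false literal vertices. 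Since $A$ satisfies $c_i$, at least one literal of $c_i$ is true, so its literal vertex is a true literal vertex, which precedes $u_i$ and is not reached; this leaves at most $|c_i| - 1$ false literal vertices, so $|\reach_r(u_i)| \leq 4 + 2 = 6$ when $|c_i| = 3$ and $\leq 5 + 1 = 6$ when $|c_i| = 2$. Assembling the five bounds gives $\col_r(G_\sigma) \leq 6$, hence $\col_r(G(\varphi)) \leq 6$.

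I expect the clause-vertex case, and the ordering decision it rests on, to be the main obstacle. The delicate point is that $w_2, w_3, w_4$ are joined to \emph{every} positive literal vertex (and $w_5, w_6, w_7$ to every negative one), so placing any false literal vertex after the clique in $\sigma$ would let a clique vertex reach it and exceed $6$; putting all false literal vertices before the clique is precisely what keeps every clique vertex at reach $\leq 6$ while still letting each clause vertex reach exactly the false literal vertices of its own literals. One must also check that every subdivided edge has all $r-1$ internal vertices before the clause vertices, so that a clause vertex does reach its false literal vertices — this is what makes an unsatisfied clause yield reach $7$ in the converse direction — yet never reaches anything one step farther, which would already sit at distance $r+1 > r$.
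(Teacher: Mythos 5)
Your ordering and case analysis are essentially identical to the paper's proof: the same five blocks (subdivision vertices, true literal vertices, clause vertices, false literal vertices, then $w_1,\dots,w_7$), and the same satisfiability argument for the clause vertices. Four of your five cases are correct and match the paper.

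The false-literal case, however, contains a genuine error. For $v_j$ with $x_j$ false you claim $|\reach_r(v_j, G_\sigma)| = 3$ by listing only the forward \emph{neighbours} $w_2, w_3, w_4$. But vertices that precede $v_j$ in $\sigma$ are exactly the ones eligible to serve as \emph{internal} vertices of qualifying paths out of $v_j$: since $x_j$ is false, $v'_j$ is a true literal vertex and hence lies in block two, so $v'_j <_\sigma v_j$, and the length-$2$ paths $v_j$--$v'_j$--$w_5$, $v_j$--$v'_j$--$w_6$, $v_j$--$v'_j$--$w_7$ are $r$-qualifying for every $r \geq 2$. Thus $\reach_r(v_j, G_\sigma) = \{w_2,\dots,w_7\}$ has size $6$, not $3$ (the paper's proof notes precisely this: $v_j$ reaches $w_2,\dots,w_7$ ``either directly or through $v'_j$''). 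The lemma survives because $6 \leq 6$, but only barely --- the threshold is exactly $6$, so the three vertices you overlooked consume all remaining slack, and the reasoning pattern ``everything before $v_j$ is irrelevant'' is the opposite of how reachability works. You should verify that no further vertex is reachable through $v'_j$ or through the clause vertices preceding $v_j$ (the subdivided edges make any such path longer than $r$), which is the substance the paper supplies and your write-up omits.
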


\begin{proof}
	Let $G = G(\varphi)$, and let $A$ be an assignment of $x_1, \dots, x_n$ satisfying $\varphi$.
	Construct the total order $\sigma$ in the following manner.
	First, place the subdivision vertices at the beginning of $\sigma$.
	Then for each literal $x_j$ (or $\overline{x}_j$) which is true in $A$, add $v_j$ (or $v'_j$) to $\sigma$.
	Next, add $u_i$ to $\sigma$ for each clause $c_i$.
	Finally, add the remaining vertices corresponding to false literals in $A$ and then $w_1, \dots, w_7$ to the end of $\sigma$.

	First, we note that the subdivision vertices all have reach 2 since they only reach a single vertex in either direction along the subdivided edge.
	Consider $\reach_r(v_j, G_\sigma)$ where $v_j$ corresponds to a true literal $x_j$ in $A$.
	In this case, $v_j$ reaches $v'_j$, both vertices corresponding to the clauses containing $x_j$, and 3 clique vertices for a total of 6.
	Since every path leaving $v_j$ goes through one of these vertices before any others (not counting subdivision vertices), $v_j$ cannot reach anything else.
	Thus, $|\reach_r(v_j, G_\sigma)| \leq 6$ when $x_j$ is true in $A$.
	The argument is symmetric for $v'_j$ when $\overline{x}_j$ is true in $A$.

	Next, consider $\reach_r(u_i, G_\sigma)$.
	Since $u_i$ is connected to literal vertices using $(r-1)$-subdivided edges, $u_i$ does not have any qualifying paths which go through a true literal vertex.
	As a result, $u_i$ reaches $w_1, \dots, w_4$ and at most 2 false literal vertices since $A$ satisfies $\varphi$.
	If $c_i$ only contains 2 literals, then $u_i$ also reaches $w_5$, but it can only reach a single false literal vertex.
	Thus, $|\reach_r(u_i, G_\sigma)| \leq 6$.

	Finally, consider $\reach_r(v_j, G_\sigma)$, but now suppose $x_j$ is false in $A$.
	Every path from $v_j$ to another false literal vertex which does not go through a clique vertex has length greater than $r$. 
	Thus, $v_j$ can only reach $w_2, \dots, w_7$ (either directly or through $v'_j$), and $|\reach_r(v_j, G_\sigma)| \leq 6$ when $x_j$ is false in $A$.
	The argument is symmetric for $v'_j$ when $\overline{x}_j$ is false in $A$.
	The 7 clique vertices trivially have reach at most 6 since no other vertices appear after them in $\sigma$.
	Because every vertex has bounded reach with respect to $\sigma$, $\col_r(G_\sigma) \leq 6$.
\end{proof}

To complete the proof, the reverse direction argues that the assignment implied by an order $\sigma$ either satisfies $\varphi$, or there exists a subset of vertices (corresponding to an unsatisfied clause) which cannot be ordered with $r$-reach at most 6.

\begin{lemma} \label{lem:col-sat}
	Let $\varphi$ be an instance of \tcts{}, and let $G(\varphi)$ be the graph defined by Definition~\ref{def:col_redux}.
	If $\col_r(G(\varphi)) \leq 6$, then $\varphi$ has a satisfying assignment.
\end{lemma}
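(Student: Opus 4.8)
The plan is to run the standard reduction-correctness argument: fix a total order $\sigma$ of $G = G(\varphi)$ with $\col_r(G_\sigma) \le 6$, read off the assignment $A$ prescribed by the reduction (so $x_j$ is true iff $v_j <_\sigma v'_j$), suppose for contradiction that some clause $c_i$ is unsatisfied, and exhibit a vertex whose $r$-reach is at least $7$. Writing $c_i = (x_1 \lor x_2 \lor x_3)$ without loss of generality (the binary case is a minor variant, handled at the end), ``unsatisfied'' means $v'_j <_\sigma v_j$ for $j \in \{1,2,3\}$. As in Lemma~\ref{lem:wcol-sat}, the intuition is that $u_i$ gets squeezed between the false-side literal vertices and the clique; but because Definition~\ref{def:col_redux} uses a \emph{single} subdivided edge per connection, the adversary has room to hide reach inside subdivision vertices, so the $7$-clique must be used to pin the order down more tightly.

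Two ingredients do the work. First, a \emph{blocked-path} fact: if $P$ is the $(r-1)$-subdivided $a$--$b$ edge and $b >_\sigma a$, then $a$ is $r$-reachable-from... rather, $a$ $r$-reaches at least one vertex of $P$ — either $b$ itself, or the $\sigma$-first internal vertex of $P$ that is not below $a$ (the prefix of $P$ up to that vertex is an $r$-qualifying path, having length at most $r-1$). Applied in both directions, this says a clause vertex cannot be kept from ``reaching toward'' an adjacent literal vertex for free, and the vertices reached along distinct (internally disjoint) subdivided edges are distinct. Second, and crucially, a \emph{clique-placement} fact: let $w^\ast$ be the $\sigma$-first clique vertex; since $\{w_1,\dots,w_7\}$ is a clique, $\reach_r(w^\ast)$ already contains the other six clique vertices, so it contains nothing else. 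Hence no neighbor of $w^\ast$ may follow $w^\ast$, and — using the length-$2$ paths $w^\ast\!-\!v_p\!-\!v'_p$ or $w^\ast\!-\!v'_p\!-\!v_p$ through those neighbors — all literal vertices must precede all clique vertices whenever $w^\ast \ne w_1$, while all clause vertices must precede all clique vertices when $w^\ast = w_1$.

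Given these, the contradiction is quick. If $w^\ast \ne w_1$, all literal vertices precede all clique vertices, so for each $j$ the vertex $v_j$ $r$-reaches $w_2,w_3,w_4$ directly and $w_5,w_6,w_7$ through $v'_j$ (a length-$2$ path, valid since $v'_j <_\sigma v_j$ and the clique follows $v_j$), giving $|\reach_r(v_j)| \ge 6$; by the blocked-path fact $v_j$ also $r$-reaches something along its subdivided edge to $u_i$ unless $u_i <_\sigma v_j$ with all that path's subdivision vertices below $v_j$, so if no $v_j$ overflows then $u_i <_\sigma v_j$ for all three $j$, whence $u_i$ precedes $v_1$ and therefore all of $w_1,\dots,w_4$, so $\reach_r(u_i)$ contains those four plus (blocked-path once more) one vertex along each of the three subdivided edges — a total of $7$. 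If $w^\ast = w_1$, all clause vertices precede all clique vertices, so $u_i$ $r$-reaches $w_1,\dots,w_4$, and the three subdivided edges each contribute another reached vertex unless some $v_j <_\sigma u_i$ with all its subdivision vertices below $u_i$; in the first case $|\reach_r(u_i)| \ge 7$, and in the second $v'_j <_\sigma v_j <_\sigma u_i$ puts $v_j$ before all clique vertices, so $v_j$ reaches $w_2,\dots,w_7$ plus one vertex along its edge to $u_i$ — again $7$. For a binary clause $c_i$ one replaces ``$w_1,\dots,w_4$ plus three edges'' by ``$w_1,\dots,w_5$ plus two edges'' throughout, using the extra edge from $u_i$ to $w_5$.

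The main obstacle — and where I expect the genuine idea to be needed — is not the final counting but the clique-placement fact: without it the adversary can scatter subdivision vertices and reorder the clique to keep every individual reach set at exactly $6$, so one must observe that the length-$2$ paths out of $w^\ast$ through its literal neighbors collapse all of that freedom, forcing every literal vertex (or, in the one degenerate case $w^\ast = w_1$, every clause vertex) to lie to the left of the clique. Everything afterward is bookkeeping, but aligning the two cases $w^\ast = w_1$ and $w^\ast \ne w_1$ together with the binary-clause variant is where the care goes.
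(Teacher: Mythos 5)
Your proof is correct and, at its core, rests on the same three counting arguments as the paper's: the $\sigma$-first clique vertex already reaches the other six clique vertices and so can afford no further reach; a clause vertex that precedes its literal gadgets and $w_1,\dots,w_4$ reaches $4+3=7$ vertices; and a false literal vertex that precedes the clique and its clause vertex reaches $6+1=7$. The organization, however, is genuinely different. The paper fixes three specific vertices --- $u_i$, the $\sigma$-first literal vertex of the unsatisfied clause, and the $\sigma$-first clique vertex $w_\ell$ --- and argues that whichever of the three appears first in $\sigma$ must overflow, which avoids any global statement about the order. You instead promote the ``first clique vertex reaches nothing outside the clique'' observation to a global placement lemma (all literal vertices precede the entire clique when $w^\ast \neq w_1$, all clause vertices do when $w^\ast = w_1$) and then split on $w^\ast = w_1$. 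Your route is a bit longer but buys two things: the explicit blocked-path fact makes rigorous a step the paper states only in passing (that a vertex reaches ``the literal vertex or a vertex along the subdivided edge,'' and that these contributions are distinct across internally disjoint subdivided edges), and the clique-placement lemma isolates exactly how the $7$-clique eliminates the adversary's freedom to reorder $w_1,\dots,w_7$ and scatter subdivision vertices. The paper's trichotomy is the more economical of the two; both are sound, and your handling of the binary-clause variant (swapping ``$w_1,\dots,w_4$ plus three subdivided edges'' for ``$w_1,\dots,w_5$ plus two'') matches the paper's.
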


\begin{proof}
	Let $G = G(\varphi)$, and let $\sigma$ be a total order of $G$ such that $\col_r(G_\sigma) \leq 6$.
	Let $A$ be the following assignment of $x_1, \dots, x_n$: if $v_j <_\sigma v'_j$, set $x_j$ to true in $A$; otherwise, set $x_j$ to false.
	Assume that $A$ does not satisfy $\varphi$ to produce a contradiction.

	Let $c_i$ be a clause in $\varphi$ which is not satisfied by $A$.
	Without loss of generality, let $v_j$ be the first vertex in $\sigma$ such that $x_j \in c_i$.
	Since $c_i$ is not satisfied by $A$, $v'_j <_\sigma v_j$.
	Let $w_\ell$ be the vertex from $w_1, \dots, w_7$ which appears first in $\sigma$.
	We argue that the first vertex among $u_i$, $v_j$, and $w_\ell$ to appear in $\sigma$ must have $r$-reach at least 7.

	First, suppose $w_\ell <_\sigma u_i, v_j$.
	By our choice of $w_\ell$, $w_\ell$ reaches the other 6 clique vertices.
	If $\ell \leq 4$, then $w_\ell$ also reaches $u_i$, so we may assume $\ell > 4$.
	However, in this case, $w_\ell$ reaches $v'_j$ if $w_\ell <_\sigma v'_j$, or it reaches $v_j$ if $w_\ell >_\sigma v'_j$.
	Thus, $|\reach_r(w_\ell, G_\sigma)| > 6$.
	Suppose instead that $u_i <_\sigma v_j, w_\ell$.
	By our choice of $w_\ell$, $u_i$ must reach $w_1, \dots, w_4$.
	Moreover, by our choice of $v_j$, $u_i$ must reach each vertex corresponding to one of the 3 literals in $c_i$ (or a vertex along the subdivided edge) for a total of 7.
	This is not improved if $c_i$ only contains 2 literals since then $u_i$ also reaches $w_5$.
	Thus, $|\reach_r(u_i, G_\sigma)| > 6$.
	Finally, suppose that $v_j <_\sigma u_i, w_\ell$.
	In this case, $v_j$ reaches $u_i$ as well as $w_2, \dots, w_7$ since $v'_j <_\sigma v_j$.
	Thus, $|\reach_r(v_j, G_\sigma)| > 6$.
	Since one of these vertices must appear before the others, $\col_r(G_\sigma)$ cannot be at most 6.
\end{proof}

Together, Lemmas~\ref{lem:sat-col} and~\ref{lem:col-sat} show that determining $\col_r(G) \leq 6$ is NP-hard.
Since 6 is a constant, this proves the main result.

\begin{theorem} \label{thm:col_paranphard}
	When parameterized by the natural parameter $k$, \probname{$r$-Orderable} is para-NP-hard.
\end{theorem}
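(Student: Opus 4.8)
The plan is to read the theorem off directly from the two preceding lemmas, exactly as was done for \probname{Weak 2-Orderable} and \probname{Weak $r$-Orderable}. Lemma~\ref{lem:sat-col} and Lemma~\ref{lem:col-sat} together state that for the graph $G(\varphi)$ of Definition~\ref{def:col_redux} we have $\col_r(G(\varphi)) \le 6$ if and only if $\varphi$ is satisfiable; hence $\varphi \mapsto (G(\varphi), 6)$ is a many-one reduction from \tcts{} to \probname{$r$-Orderable} that always uses the single parameter value $k = 6$.

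First I would check that this reduction runs in polynomial time for each fixed $r$. The graph $G(\varphi)$ consists of one clause vertex per clause, two literal vertices per variable, seven clique vertices, and $r-1$ subdivision vertices on each clause--literal edge; since a \tcts{} instance has each literal in exactly two clauses, there are only $O(m)$ such subdivided edges, giving $|V(G(\varphi))| = O(r(n+m))$ and an obviously polynomial-time construction when $r$ is constant. Combined with the NP-hardness of \tcts{} established in the Preliminaries (via Tovey's theorem~\cite{tovey1984simplified} and the literal-padding gadget), this shows that deciding whether $\col_r(G) \le 6$ is NP-hard.

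Finally I would invoke the definition of para-NP-hardness: a problem parameterized by $k$ is para-NP-hard once it is NP-hard for some fixed value of $k$, and here $k = 6$ works for every fixed $r \ge 2$. I expect no genuine obstacle at this stage --- all of the graph-theoretic content lives in Lemmas~\ref{lem:sat-col} and~\ref{lem:col-sat}; the only point requiring a moment's care is that $r$ must be treated as a fixed constant so that the $(r-1)$-subdivision gadget stays polynomial-sized (and indeed for $r = 1$ the problem is trivially in P, consistent with the restriction $r \ge 2$). I would also note, paralleling the weak case, that NP-hardness of this exact decision problem for the constant $6$ immediately yields both that \probname{$r$-Orderable} is not in XP unless P $=$ NP and that \probname{Minimum $r$-Orderability} is NP-hard to approximate within a factor of $\frac{7}{6}$.
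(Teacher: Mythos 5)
Your proposal matches the paper's proof exactly: it combines Lemmas~\ref{lem:sat-col} and~\ref{lem:col-sat} to conclude that deciding $\col_r(G) \leq 6$ is NP-hard, and then invokes the definition of para-NP-hardness via the fixed parameter value $k = 6$. The extra checks you include (polynomial size of the reduction for fixed $r$, and the downstream corollaries) are correct but are left implicit in the paper.
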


\begin{proof}
    Lemmas~\ref{lem:sat-col} and~\ref{lem:col-sat} imply that determining if $\col_r(G) \leq k$ is NP-hard even when $k = 6$.
\end{proof}

\begin{corollary} \label{cor:col-apx-hardness}
	It is NP-hard to approximate \probname{Minimum $r$-Orderability} within a factor of $\frac{7}{6}$.
\end{corollary}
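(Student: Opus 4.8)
The proof is a routine gap argument extracted directly from the reduction behind Theorem~\ref{thm:col_paranphard}. The plan is to observe that the map $\varphi \mapsto G(\varphi)$ of Definition~\ref{def:col_redux} is gap-creating: Lemma~\ref{lem:sat-col} shows that if $\varphi$ is satisfiable then $\col_r(G(\varphi)) \le 6$, while the contrapositive of Lemma~\ref{lem:col-sat} shows that if $\varphi$ is unsatisfiable then $\col_r(G(\varphi)) > 6$; since $\col_r$ is integer-valued, the latter means $\col_r(G(\varphi)) \ge 7$. Hence the optimum of \probname{Minimum $r$-Orderability} on $G(\varphi)$ is either at most $6$ or at least $7$, and which case holds is exactly determined by the satisfiability of $\varphi$. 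The construction runs in polynomial time (from Definition~\ref{def:col_redux} it adds $O(rn)$ subdivision vertices, $2n$ literal vertices, one vertex per clause, and a fixed $7$-clique), so this is a polynomial-time gap reduction.

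Next I would argue by contradiction: suppose there is a polynomial-time algorithm approximating \probname{Minimum $r$-Orderability} within a factor less than $\frac76$. Run it on $G(\varphi)$ for a given \tcts{} instance $\varphi$. If $\varphi$ is satisfiable, the returned value is at most $6 \cdot \frac76 = 7$, and strictly below $7$ for any ratio strictly below $\frac76$, hence at most $6$; if $\varphi$ is unsatisfiable, the returned value is at least the optimum, which is at least $7$. Comparing the output to the threshold $7$ therefore decides satisfiability of $\varphi$ in polynomial time, contradicting the NP-hardness of \tcts{} established earlier via Theorem~2.1 of~\cite{tovey1984simplified}. Thus no such approximation exists unless P $=$ NP, which is the claimed inapproximability bound of $\frac76$.

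There is essentially no obstacle here; the only points needing care are that $\col_r$ takes only integer values — so that the interval $(6,7)$ contains no feasible objective value and the gap is genuinely clean — and that $G(\varphi)$ is built in polynomial time, both of which are immediate. The same template, applied to the reductions of Definitions~\ref{def:wcol2_redux} and~\ref{def:wcol_redux} with their respective pairs of lemmas, yields Corollaries~\ref{cor:wcol2-apx-hardness} and~\ref{cor:wcol-apx-hardness}: there the gaps are $\{5\}$ versus $\{\ge 6\}$ and $\{2r-1\}$ versus $\{\ge 2r\}$, giving hardness ratios $\frac65$ and $\frac{2r}{2r-1}$.
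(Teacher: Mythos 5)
Your proposal is correct and matches the paper's intent exactly: the corollary is stated as an immediate consequence of the gap established by Lemmas~\ref{lem:sat-col} and~\ref{lem:col-sat} (satisfiable $\Rightarrow \col_r(G(\varphi)) \le 6$, unsatisfiable $\Rightarrow \col_r(G(\varphi)) \ge 7$ by integrality), and your write-up just makes that standard gap argument explicit. Your care about strict versus non-strict ratios is warranted --- the gap technically rules out ratios strictly below $\frac{7}{6}$, which is the usual reading of the corollary's phrasing --- and your remark that the same template yields Corollaries~\ref{cor:wcol2-apx-hardness} and~\ref{cor:wcol-apx-hardness} is also how the paper treats those statements.
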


\begin{corollary} \label{cor:col-xp-hardness}
	There does not exist an algorithm which can decide a given instance $(G, k)$ of \probname{$r$-Orderable} in $O(n^{f(k)})$ time unless P = NP.
	Thus, \probname{$r$-Orderable} is not in XP.
\end{corollary}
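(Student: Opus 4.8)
The plan is to derive this corollary directly from the para-NP-hardness established in Theorem~\ref{thm:col_paranphard}. Recall that Lemmas~\ref{lem:sat-col} and~\ref{lem:col-sat} together show that deciding whether $\col_r(G) \leq k$ is NP-hard \emph{already when $k$ is fixed to the constant $6$}. So I would argue by contradiction: suppose there were an algorithm $\mathcal{A}$ that decides an arbitrary instance $(G, k)$ of \probname{$r$-Orderable} in time $O(n^{f(k)})$ for some computable function $f$. Restricting attention to the slice of instances with $k = 6$, the algorithm $\mathcal{A}$ then runs in time $O(n^{f(6)})$; since $f(6)$ is a fixed constant independent of the input, this is a polynomial bound in $n$. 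Hence $\mathcal{A}$ would decide $\col_r(G) \leq 6$ in polynomial time, contradicting the NP-hardness of that problem (Theorem~\ref{thm:col_paranphard}) unless P $=$ NP.

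For the second sentence, I would invoke the standard definition that the parameterized complexity class XP consists exactly of those problems admitting an algorithm with runtime $O(n^{f(k)})$ for some function $f$ of the parameter; the argument above shows no such algorithm can exist for \probname{$r$-Orderable} parameterized by $k$ unless P $=$ NP, so \probname{$r$-Orderable} $\notin$ XP under that assumption. The same template also justifies Corollary~\ref{cor:col-apx-hardness}: any polynomial-time $\tfrac{7}{6}$-approximation for \probname{Minimum $r$-Orderability} could distinguish optima of value $6$ from optima of value at least $7$, again deciding an NP-hard problem in polynomial time.

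There is essentially no technical obstacle here — all of the content lies in the reduction of the preceding subsection. This corollary is the routine ``para-NP-hard $\Rightarrow$ not in XP'' bookkeeping step, included for completeness; the only care needed is to note that the hardness holds for a \emph{constant} value of the parameter, which is precisely what the proof of Theorem~\ref{thm:col_paranphard} supplies.
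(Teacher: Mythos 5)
Your argument is correct and is exactly the reasoning the paper intends (and leaves implicit): since Lemmas~\ref{lem:sat-col} and~\ref{lem:col-sat} establish NP-hardness already at the fixed parameter value $k=6$, any $O(n^{f(k)})$ algorithm would run in polynomial time $O(n^{f(6)})$ on that slice, contradicting NP-hardness unless P $=$ NP. This is the same standard ``para-NP-hard $\Rightarrow$ not in XP'' step the paper relies on, so there is nothing to add.
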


\section{Approximating the Generalized Coloring Numbers}

In this section, we prove that a straightforward greedy algorithm for \probname{Minimum $r$-Orderability} gives a $(k-1)^{r-1}$-approximation in graphs with $r$-admissibility at most $k$.
We establish that the same algorithm also gives an $O(k^{r-1})$-approximation for \probname{Minimum Weak $r$-Orderability}.
Our analysis relies on properties of $r$-admissibility and its relationships with the $r$-coloring number and the weak $r$-coloring number.

\begin{definition} \label{def:adm}
    Given a graph $G = (V, E)$ and a prefix order $\sigma$, the \textbf{$r$-backconnectivity} $\bcon_r(u, G_\sigma)$ of a vertex $u \in V$ is the maximum number of vertex-disjoint (aside from $u$) $r$-qualifying paths beginning at $u$.
    The \textbf{$r$-admissibility} of $G$ is defined as
    \[ \adm_r(G) := \min_{\sigma \in \Pi} \max_{u \in V}\ \bcon_r(u, G_\sigma). \]
\end{definition}

We note again that this definition adheres to a left-to-right convention, reversing the direction used in prior work.
Unlike $\reach_r(u, G_\sigma)$ and $\wreach_r(u, G_\sigma)$ which are sets, notice that $\bcon_r(u, G_\sigma)$ is simply a non-negative integer, and a set of disjoint paths witnessing $\bcon_r(u, G_\sigma)$ need not be unique.
By definition, $\adm_r(G_\sigma) \leq \col_r(G_\sigma) \leq \wcol_r(G_\sigma)$.
In~\cite{dvorak2013constant}, Dvo\v{r}\'{a}k observed that bounds also hold in the opposite direction.

\begin{lemma}{\cite{dvorak2013constant}} \label{lem:adm_col}
	If $\adm_r(G_\sigma) \leq k$, then $\col_r(G_\sigma) \leq k \cdot (k-1)^{r-1}$ and $\wcol_r(G_\sigma) \leq \frac{k^{r+1} - 1}{k-1}$.
\end{lemma}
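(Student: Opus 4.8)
The plan is to fix a prefix order $\sigma$ with $\adm_r(G_\sigma) \le k$ and bound, for each vertex $u$, the number of vertices that are $r$-reachable (respectively weakly $r$-reachable) from $u$ by carefully counting along qualifying paths. The central tool is a fan-out / Menger-type argument: if $\bcon_r(u, G_\sigma) \le k$ then any set of internally-disjoint $r$-qualifying paths out of $u$ has size at most $k$, but a single $r$-reachable vertex may be witnessed by a qualifying path that shares internal vertices with other such paths, so I need to reorganize the reachability structure into a bounded-degree tree and count its leaves.

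Concretely, for the $r$-coloring number I would argue by induction on $r$. For $r=1$, $\reach_1(u,G_\sigma)$ is exactly the set of forward neighbors reachable by length-one paths, and these are trivially internally disjoint, so $|\reach_1(u,G_\sigma)| = \bcon_1(u,G_\sigma) \le k \le k\cdot(k-1)^0$. For the inductive step, consider the first internal vertex $x$ along each $r$-qualifying path from $u$; by downward totality and the definition of $r$-reachability (all internal vertices precede $u$), the set of such first-steps $x$ forms a collection of $1$-qualifying paths from $u$, and a maximal internally-disjoint subfamily has size at most $k$, giving at most $k$ choices of $x$. Each remaining tail from $x$ is an $(r-1)$-qualifying path in the ordered graph, so by induction each $x$ reaches at most $(k-1)^{r-1}$... — here one must be slightly careful, since from $x$ the relevant bound is governed by $\bcon_{r-1}$ at $x$ which could be $k$, but the vertex $u$ itself occupies one of the "slots," leaving $(k-1)$; iterating yields the factor $(k-1)^{r-1}$ after the initial factor $k$. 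Multiplying, $|\reach_r(u,G_\sigma)| \le k\cdot(k-1)^{r-1}$.

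For the weak $r$-coloring number the bookkeeping is different because a weak $r$-qualifying path only requires internal vertices to precede the \emph{endpoint} $v$, not $u$, so tails do not immediately become weak $(r-1)$-qualifying paths from the same base in a clean recursive way. The standard fix is to sum over path lengths: a weakly $r$-reachable vertex is reached by a path of some length $1 \le \ell \le r$, and by peeling off one edge at a time while tracking that each branching step is controlled by an admissibility bound of $k$, one gets at most $k^\ell$ vertices reachable at exactly distance $\ell$. Summing the geometric series $\sum_{\ell=0}^{r} k^\ell = \frac{k^{r+1}-1}{k-1}$ (the $\ell=0$ term accounting for $u$ itself or, equivalently, absorbing an off-by-one) yields the claimed bound $\wcol_r(G_\sigma) \le \frac{k^{r+1}-1}{k-1}$.

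The main obstacle I anticipate is making the "each branching step costs a factor of $k$" claim rigorous: a priori the vertices weakly $r$-reachable from $u$ need not be organized by a tree of qualifying paths with internal-disjointness at every level, so one must argue that one can always \emph{choose} witnessing paths forming such a tree — e.g. by taking, greedily, a shortest weak qualifying path to each reachable vertex and showing that the union of these is "laminar enough" that the out-degree at every node is bounded by the relevant backconnectivity. Handling the interaction between the "internal vertices precede $v$" condition and shared prefixes of two such shortest paths (which may be witnessed via different, crossing paths) is the delicate point; resolving it likely requires a short exchange/rerouting argument showing that crossings can be eliminated without increasing path lengths or leaving the set of reachable vertices.
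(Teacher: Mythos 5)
Your plan for $\col_r$ is in the right spirit (the paper, via Lemmas~\ref{lem:col_bound} and~\ref{lem:wcol_bound}, essentially reproves this statement with a shortest-path tree whose internal vertices precede $u$ and whose branching is controlled by the backconnectivity of each \emph{internal} vertex, with one slot consumed by the path back to $u$). But your key counting step is broken as stated. You claim the first internal vertex $x$ of an $r$-qualifying path from $u$ gives a ``$1$-qualifying path from $u$,'' hence at most $k$ choices of $x$. This is false twice over: internal vertices of a qualifying path satisfy $x <_\sigma u$, so the edge $ux$ is precisely \emph{not} a $1$-qualifying path; and quantified over all qualifying paths the number of distinct first internal vertices is not bounded by $\bcon_r(u,G_\sigma)$ at all --- $u$ may have arbitrarily many backward neighbors that all funnel to a single forward vertex, giving backconnectivity $1$ but unboundedly many first-steps. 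The fix (which is what the paper does) is to first \emph{choose} a witnessing tree $T$ of shortest qualifying paths and bound $\deg_T(u)$ by $k$ using that root-to-leaf paths through distinct children of $u$ are vertex-disjoint $r$-qualifying paths; only then does the per-vertex degree bound at internal nodes (via their own backconnectivity, one slot taken by the upward path to $u$) give $k\cdot(k-1)^{r-1}$ leaves. You gesture at the tree in your preamble but your induction does not actually use it.

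The weak bound is where the real missing idea lies, and you correctly flag it but do not resolve it. Your proposed invariant ``at most $k^\ell$ vertices weakly reachable at exactly distance $\ell$'' is not the right one: the levels in the correct recursion are not graph distances, and a naive per-distance count overshoots (already at $r=2$ it gives roughly $2k^2$ rather than $k^2+k$). The paper's route is to split each weak $r$-qualifying $u$--$w$ path at the \emph{first vertex $v$ appearing after $u$ in $\sigma$}, yielding
\[ \wreach_r(u, G_\sigma) \;=\; \bigcup_{v \in \reach_r(u, G_\sigma)} \{v\} \cup \wreach_{r-d_v}(v, G_\sigma), \]
where $d_v$ is the least $i$ with $v$ $i$-reachable from $u$, and then to induct on $r$. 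Crucially this induction needs more than $|\reach_r(u,G_\sigma)| \le k(k-1)^{r-1}$: it needs the weighted inequality $\sum_{v \in \reach_r(u,G_\sigma)} (k-1)^{r-d_v} \le k\cdot(k-1)^{r-1}$, which is exactly what the tree argument delivers (leaves at smaller depth cost more). Your geometric series $\sum_\ell k^\ell$ lands on the right number only by coincidence of arithmetic; without the path-splitting observation and the weighted form of the reach bound, the ``each branching step costs a factor of $k$'' claim you worry about cannot be made rigorous, so the second half of the lemma remains unproved in your proposal.
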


Computing the $r$-backconnectivity of a vertex is NP-hard~\cite{itai1982complexity}, and so Dvo\v{r}\'{a}k's algorithm for the generalized coloring numbers approximates the $r$-admissibility before applying these relationships.
This approach yields an $(r \cdot (rk-1)^{r-1})$-approximation for the $r$-coloring number and an $O((rk)^{r-1})$-approximation for the weak $r$-coloring number.
We show that the approximation factors can be improved to $(k-1)^{r-1}$ and $O(k^{r-1})$ respectively by applying a similar analysis directly to a greedy algorithm.
The greedy algorithm we use generalizes the strategy used to compute degeneracy and constructs an order $\sigma$ by repeatedly selecting the vertex with the smallest \emph{estimated} $r$-backconnectivity.

\begin{definition} \label{def:estimate}
    Given a graph $G = (V, E)$ and a prefix order $\sigma$, an $\ell$-qualifying $u$-$v$ path is \emph{shortest} if there does not exist an $(\ell-1)$-qualifying $u$-$v$ path.
    The \textbf{estimated} $r$-backconnectivity of $u$, denoted $\est_r(u, G_\sigma)$, is the maximum number of disjoint (aside from $u$) shortest $r$-qualifying paths.
\end{definition}

Clearly, $\est_r(u, G_\sigma) \leq \bcon_r(u, G_\sigma)$ since any paths $P_1, \dots, P_\ell$ witnessing that $\est_r(u, G_\sigma) \geq \ell$ also witness that $\bcon_r(u, G_\sigma) \geq \ell$.
The estimated $r$-backconnectivity of $u$ can be computed in $O(kn)$ time by computing a maximum flow in a directed acyclic graph derived from the level graph rooted at $u$.
Since every source-sink path in this graph is necessarily shortest and has length at most $r$ by construction, flow techniques work in this setting even though they fail for general $r$-backconnectivity (where the maximum flow can exceed $\bcon_r(u, G_\sigma)$ by using a path with length greater than $r$).

\IncMargin{1em}
\begin{algorithm}
    \SetKwInOut{Input}{input}
	\SetKwInOut{Output}{output}
	\Input{a graph $G$ with $\adm_r(G) = k$ and an integer $r>0$.}
    \Output{a total order $\sigma$ with $\col_r(G_\sigma) \leq k \cdot (k-1)^{r-1}$ and $\wcol_r(G_\sigma) \leq \frac{k^{r+1} - 1}{k-1}$.}
    \BlankLine
    $S\leftarrow \{u : \deg(u)\ |\ u \in V(G)\}$ \\
    $\sigma\leftarrow [\emptyset]$ \\
    \While{$S$ is not empty}{

		$u \leftarrow$ arg\,min $S$

		append $u$ to $\sigma$ and remove $u$ from $S$

		\For{$v$ in $\reach_r(u, G_\sigma)$}{
			$S[v] = \est_r(v, G_\sigma)$
		}
    }
    \Return{$\sigma$}
    \caption{BoundedColoring($G,r$)}
    \label{alg:main}
\end{algorithm}

Algorithm~\ref{alg:main} describes our approach in detail.
At each step, the algorithm selects the vertex with minimum estimated $r$-backconnectivity in constant time using a bucket queue.
Then, the algorithm updates the estimated $r$-backconnectivity for each vertex that may have been affected by ordering $u$.
Since these are exactly the vertices in $\reach_r(u, G_\sigma)$, Algorithm~\ref{alg:main} takes $O(tkn^2)$ time in total when $\col_r(G_\sigma) = t$.
To prove the correctness of Algorithm~\ref{alg:main}, we first show that for any prefix order, there exists an unordered vertex with small backconnectivity.

\begin{lemma} \label{lem:bc_noninc}
	Given a graph $G = (V, E)$ and a prefix order $\sigma$, there exists an unordered vertex $u$ with respect to $\sigma$ such that $\bcon_r(u, G_\sigma) \leq \adm_r(G) = k$.
\end{lemma}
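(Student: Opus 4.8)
The plan is to argue by contradiction using an averaging/extension argument. Suppose for contradiction that every unordered vertex $u$ has $\bcon_r(u, G_\sigma) \geq k+1$. We want to extend $\sigma$ to a total order $\tau$ witnessing $\adm_r(G) \leq k$, contradicting the hypothesis $\adm_r(G) = k$ — but the subtlety is that $\adm_r(G) = k$ only tells us \emph{some} order achieves backconnectivity $k$, not that $\sigma$ does. So the cleaner route is: take an \emph{optimal} total order $\tau$ with $\adm_r(G_\tau) \leq k$ (which exists since $\adm_r(G) = k$), and show that the last unordered vertex of $\sigma$ under $\tau$ — i.e., the $\tau$-minimum among unordered vertices — already has small $r$-backconnectivity with respect to $\sigma$. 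The key observation is that $\sigma$ and $\tau$ can be assumed compatible, or more precisely, that among the unordered vertices of $\sigma$, the one that comes first in $\tau$ is a good candidate.

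First I would set $U$ to be the set of unordered vertices of $\sigma$, and among these let $u$ be the $\tau$-minimum (i.e. $u \leq_\tau w$ for all $w \in U$). The claim is $\bcon_r(u, G_\sigma) \leq \bcon_r(u, G_\tau) \leq k$. The second inequality is immediate from $\adm_r(G_\tau) \leq k$. For the first inequality, I would show that every $r$-qualifying path for $u$ with respect to $\sigma$ is also an $r$-qualifying path with respect to $\tau$: such a path $P$ has all internal vertices $<_\sigma u$, hence all internal vertices are \emph{ordered} in $\sigma$, hence (since $u$ is unordered) all internal vertices come before $u$ in any extension of $\sigma$ — in particular before $u$ in $\tau$, provided $\tau$ extends $\sigma$. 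The endpoint condition $v \not<_\sigma u$ transfers similarly: if $v$ is ordered then $v <_\sigma u$ is false means... hmm, actually $v \not<_\sigma u$ could still fail to give $v \not<_\tau u$ if $v$ is unordered in $\sigma$ and happens to come before $u$ in $\tau$ — but that cannot happen because $u$ is the $\tau$-minimum unordered vertex. So the endpoint is fine too. Thus the set of disjoint $r$-qualifying paths from $u$ in $G_\sigma$ injects into such a set in $G_\tau$, giving $\bcon_r(u, G_\sigma) \leq \bcon_r(u, G_\tau) \leq k$.

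The one gap is the assumption "$\tau$ extends $\sigma$," which need not hold for an arbitrary optimal $\tau$. To handle this I would instead \emph{build} $\tau$ as: the ordered vertices of $\sigma$ in their $\sigma$-order, followed by the unordered vertices of $\sigma$ in the order given by some optimal order $\tau_0$ of $G$ restricted to $U$. One then must re-verify that $\adm_r$ of this spliced order is still at most $k$ — this is where I expect the main obstacle to lie, since moving the ordered block to the front could in principle create new $r$-qualifying paths for vertices in $U$. However, the point is that for $u \in U$, an $r$-qualifying path with respect to the spliced $\tau$ has internal vertices all $<_\tau u$; any such internal vertex in $U$ must be $<_{\tau_0} u$, and any internal vertex that is $\sigma$-ordered was $<_\sigma u$-comparable trivially — and crucially, a shortest-style argument or a direct count shows the disjoint-path packing from $u$ in this $\tau$ is no larger than in $\tau_0$ on the relevant induced substructure, hence $\leq k$. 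Rather than splice, the slickest fix is: observe $\adm_r$ is monotone under taking the prefix order seriously — $\bcon_r(u, G_\sigma)$ for $u$ unordered depends only on paths through already-ordered vertices, so we may as well first fix the ordered part (it is a legitimate prefix of \emph{some} optimal order after a relabeling argument, or we invoke that $\adm_r(G) = k$ means \emph{every} prefix order can be completed to one of width $\leq k$ — but that is false in general!). So I will commit to the splicing construction and carefully check the packing bound, which is the crux; everything else is the routine path-translation bookkeeping sketched above.
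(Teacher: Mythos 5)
Your setup is the right one and in fact coincides with the paper's: fix an optimal order $\tau_0$ with $\adm_r(G_{\tau_0}) = k$ and take $u$ to be the $\tau_0$-minimum vertex among those unordered in $\sigma$. The problem is that the step you explicitly defer --- bounding the disjoint-path packing from $u$ under the spliced order by the packing under $\tau_0$ --- is the entire content of the lemma, and ``a shortest-style argument or a direct count'' does not supply it. As you yourself observe, moving every $\sigma$-ordered vertex in front of $u$ can create $r$-qualifying paths that are not $r$-qualifying for $\tau_0$ (their internal vertices may lie after $u$ in $\tau_0$), so no path-by-path transfer of the packing is available; something must be done to the paths themselves. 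Committing to ``carefully check the packing bound, which is the crux'' leaves the proof's central inequality unproven.

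The missing idea is truncation, and having it makes the splice unnecessary: one can compare $\sigma$ to $\tau_0$ directly. Let $P_1, \dots, P_\ell$ be disjoint $r$-qualifying paths from $u$ witnessing $\bcon_r(u, G_\sigma) = \ell$. Every internal vertex of $P_j$ precedes $u$ in $\sigma$ and is therefore ordered, while the endpoint is not below $u$ in $\sigma$ and is therefore unordered; since $u$ is the $\tau_0$-minimum unordered vertex, that endpoint does not precede $u$ in $\tau_0$. Hence $P_j$ contains a first vertex after $u$ that does not precede $u$ in $\tau_0$; let $P'_j$ be the initial subpath of $P_j$ ending there. Each $P'_j$ has length at most $r$, all of its internal vertices precede $u$ in $\tau_0$, and its endpoint does not, so $P'_j$ is $r$-qualifying with respect to $\tau_0$, and the $P'_j$ remain disjoint aside from $u$. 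Therefore $\ell \leq \bcon_r(u, G_{\tau_0}) \leq \adm_r(G_{\tau_0}) = k$. This truncation of each path at the first vertex outside the set of $\tau_0$-predecessors of $u$ is exactly how the paper closes the argument; without it, your proposal is a correct plan with a hole where the proof should be.
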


\begin{proof}
	Let $\xi = (v_1, \dots, v_n)$ be a total order of $G$ such that $\adm_r(G_\xi) = k$, and let $v_i$ be the first vertex in $\xi$ which is unordered in $\sigma$.
	By our choice of $v_i$, $\{v_1, \dots, v_{i-1}\}$ are ordered in $\sigma$.
	Let $P_1, \dots, P_\ell$ be a set of disjoint $r$-qualifying paths which witness that $\bcon_r(v_i, G_\sigma) = \ell$.
	Let $P'_j$ be the subpath of $P_j$ beginning at $v_i$ and ending at the first vertex along $P_j$ which is not in $\{v_1, \dots, v_{i-1}\}$.
	Since $P_j$ is an $r$-qualifying path from $v_i$, its endpoint must be in $\{v_{i+1}, \dots, v_n\}$, and so the path $P'_j$ is well-defined.
	The paths $P'_1, \dots, P'_\ell$ witness that $\bcon_r(v_i, G_\xi) \geq \ell$.
	Thus, $\bcon_r(v_i, G_\sigma) \leq \bcon_r(v_i, G_\xi) \leq \adm_r(G_\xi) = k$.
\end{proof}

Lemma~\ref{lem:bc_noninc} implies that the estimated backconnectivity of some unordered vertex will always be at most the admissibility.
Now, we prove that if the estimated $r$-backconnectivity is small, then the $r$-reach is also bounded.
To state the result precisely, we need additional notation.
Given a graph $G = (V, E)$, a prefix order $\sigma$, and a vertex $u$, let $d_v(u, G_\sigma)$ be the minimum $i$ such that $v$ is $i$-reachable from $u$.
We will abbreviate $d_v(u, G_\sigma)$ as $d_v$ when $u$ is clear from context.

\begin{lemma} \label{lem:col_bound}
	Let $G = (V, E)$ be a graph, and let $k = \adm_r(G)$.
	If $\sigma$ is a total order on $V$ such that $\est_r(u, G_\sigma) \leq k$ for all $u$, then
	\[ \sum_{v \in \reach_r(u, G_\sigma)} (k-1)^{r-d_v} \leq k \cdot (k-1)^{r-1}. \]
\end{lemma}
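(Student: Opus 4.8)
The plan is to prove the statement by induction on $r$. Two preliminary remarks make the induction go through: a maximum packing of internally disjoint shortest $(r{-}1)$-qualifying paths out of a vertex is also a valid packing of shortest $r$-qualifying paths, so $\est_{r-1}(w, G_\sigma) \le \est_r(w, G_\sigma) \le k$ for every $w$ and the induction hypothesis at radius $r{-}1$ applies to the same pair $(G,\sigma)$; and, more generally, $|\reach_1(w, G_\sigma)| = \est_1(w, G_\sigma) \le k$. The hypothesis ``$k = \adm_r(G)$'' is only used as a label for how $k$ arises, so the statement I induct on is: whenever $\est_r(w, G'_{\sigma'}) \le k$ for all $w$, the weighted $r$-reach bound holds. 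For the base case $r = 1$ every $v \in \reach_1(u, G_\sigma)$ has $d_v = 1$, so the left-hand side is $|\reach_1(u, G_\sigma)| \le k = k(k-1)^0$. I will also treat the degenerate case $k \le 1$ directly (then the graph is essentially trivial for $r$-reachability and the bound is immediate).

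For the inductive step, fix $u$, write $N^+ := \reach_1(u, G_\sigma)$ and $R := \{v \in \reach_r(u, G_\sigma) : d_v \ge 2\}$; the $N^+$-part of the sum equals $|N^+|(k-1)^{r-1}$. For each $v \in R$ fix a shortest qualifying $u$--$v$ path $u = p_0, p_1, \dots, p_{d_v} = v$ and let $q_v := p_t$ be the $\sigma$-maximum among the internal vertices $p_1, \dots, p_{d_v - 1}$, with $a_v := t \ge 1$. Because $q_v$ is $\sigma$-maximal on the path, both the reversed prefix $q_v, p_{t-1}, \dots, p_1, u$ and the suffix $q_v, p_{t+1}, \dots, p_{d_v - 1}, v$ are themselves qualifying paths (their internal vertices lie below $q_v$, and neither $u$ nor $v$ lies below $q_v$); hence $u, v \in \reach_{r-1}(q_v, G_\sigma)$, these two witnessing subpaths meet only in $q_v$, and $d_u(q_v, G_\sigma) + d_v(q_v, G_\sigma) \le a_v + (d_v - a_v) = d_v$. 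Let $Q := \{q_v : v \in R\}$ and, for $q \in Q$, put $\beta_q := (k-1)^{-d_u(q, G_\sigma)} \in (0, (k-1)^{-1}]$.

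Now group $R$ by middle vertex. For fixed $q \in Q$ the vertices $\{v \in R : q_v = q\}$ together with $u$ are distinct elements of $\reach_{r-1}(q, G_\sigma)$, so dividing the induction hypothesis at $q$ by $(k-1)^{r-1}$ gives $\sum_{v : q_v = q}(k-1)^{-d_v(q, G_\sigma)} \le \tfrac{k}{k-1} - \beta_q$. Combining this with $(k-1)^{r - d_v} \le (k-1)^{r - d_u(q_v) - d_v(q_v)}$ yields
\[
\sum_{v \in R}(k-1)^{r - d_v} \;\le\; (k-1)^r \sum_{q \in Q} \beta_q\!\left(\tfrac{k}{k-1} - \beta_q\right),
\]
and adding the $N^+$-term and dividing by $(k-1)^{r-1}$ reduces the whole claim to the elementary-looking inequality
\[
|N^+| \;+\; k\sum_{q \in Q}\beta_q \;-\; (k-1)\sum_{q \in Q}\beta_q^{\,2} \;\le\; k .
\]
Since the function $g(t) = kt - (k-1)t^2$ is concave with $g(0) = 0$ and $g\!\left(\tfrac1{k-1}\right) = 1$, and since $\tfrac1{k-1}$ lies at or to the left of the maximizer of $g$ whenever $k \ge 2$, we have $g(\beta_q) \le 1$ for every $q$; hence the displayed inequality follows once we know $|Q| + |N^+| \le k$.

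The main obstacle is therefore to prove $|Q| + |N^+| \le k$, i.e.\ that the forward neighbours of $u$ together with the set of middle vertices cannot outnumber $\est_r(u, G_\sigma)$. The strategy is to exhibit $|Q| + |N^+|$ internally disjoint shortest qualifying paths out of $u$: a length-one path for each vertex of $N^+$, and, for each $q \in Q$, a shortest qualifying path from $u$ that passes through $q$ (such a path exists, e.g.\ the one obtained by concatenating the reversed prefix and the suffix above). The length-one paths are automatically disjoint from every longer shortest qualifying path (whose internal vertices lie below $u$ and whose endpoint has $d \ge 2$), so the real difficulty is making the paths chosen for the various middle vertices pairwise disjoint: a naive choice can overlap, because one middle vertex may appear as a non-maximal internal vertex on another's path. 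I expect this to be the delicate point, and I would resolve it by a flow/Menger argument on the level graph rooted at $u$ (the same DAG used to compute $\est_r$), rerouting through the reachable vertices as needed; once the disjoint family is produced, $\est_r(u, G_\sigma) \le k$ gives $|Q| + |N^+| \le k$ and the induction closes. Everything else — the verification that the subpaths invoked are genuinely qualifying and have length at most $r{-}1$, and the routine algebra above — is bookkeeping.
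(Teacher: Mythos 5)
There is a genuine gap, and unfortunately it sits exactly at the point you flag as delicate: the inequality $|Q| + |N^+| \le k$ is not just hard to prove by a flow argument at $u$ — it is false, and so is the displayed inequality $|N^+| + k\sum_q \beta_q - (k-1)\sum_q\beta_q^2 \le k$ that you reduce to. Concretely, take $r = 3$ and the ordered graph with $\sigma = (a, b_1, b_2, u, w_1, w_2, v_1, v_2)$ and edges $ua$, $uw_1$, $uw_2$, $ab_1$, $ab_2$, $b_1v_1$, $b_2v_2$ (append a disjoint $K_4$ at the end of $\sigma$ to pin $k = \adm_3(G) = 3$). One checks $\est_3(\cdot, G_\sigma) \le 3$ everywhere: the extremes are $a$, with the three disjoint paths $a$-$u$, $a$-$b_1$, $a$-$b_2$, and $u$, with $u$-$w_1$, $u$-$w_2$, and \emph{one} of $u$-$a$-$b_1$-$v_1$, $u$-$a$-$b_2$-$v_2$ (they share $a$). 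At $u$ you get $N^+ = \{w_1, w_2\}$, $R = \{v_1, v_2\}$ with unique shortest qualifying paths $u$-$a$-$b_i$-$v_i$, whose $\sigma$-maximal internal vertices are $b_1, b_2$, so $Q = \{b_1, b_2\}$ and $|Q| + |N^+| = 4 > 3 = \est_3(u, G_\sigma)$; no family of four internally disjoint shortest qualifying paths out of $u$ exists, because $b_1$ and $b_2$ both sit behind the single neighbor $a$. Your intermediate inequality also fails numerically: $\beta_{b_1} = \beta_{b_2} = (k-1)^{-2} = \tfrac14$, giving $2 + 3\cdot\tfrac12 - 2\cdot\tfrac18 = \tfrac{13}{4} > 3$, and the resulting upper bound on the weighted sum is $13$, exceeding the target $k(k-1)^{r-1} = 12$ (the true value here is $10$). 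The structural problem is that your accounting charges each middle vertex $q$ a full unit of a budget of $k$ held at $u$, but several middle vertices can funnel through one backward neighbor of $u$ and collectively occupy only one ``slot'' there; the slack you gain from the induction hypothesis at each $q$ does not compensate.

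The paper avoids this by never trying to charge the branching to $u$ alone. It assembles the shortest $r$-qualifying paths into a tree $T$ rooted at $u$ whose leaves are $\reach_r(u, G_\sigma)$, and uses the hypothesis $\est_r(v, G_\sigma) \le k$ at \emph{every} internal vertex $v$ of $T$ to bound $\deg_T(v) \le k$ (the path from $v$ up to $u$ and the paths from $v$ down to the leaves of its subtree are all shortest qualifying paths from $v$, since everything relevant lies after $v$ in $\sigma$). A bottom-up induction on $T$ then gives $\sum_{w \in L_v}(k-1)^{r - d_w} \le (k-1)^{r - \depth_T(v)}$ for each non-root $v$ (at most $k-1$ children each), and the root contributes the extra factor of $k$. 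In the example above this is exactly what saves the bound: the branching at $a$ is paid for by $a$'s own degree constraint and absorbed by the geometric decay across levels. If you want to salvage an induction on $r$, you would need to control the branching at every internal vertex of your path system, not only at $u$ and at the $\sigma$-maximal internal vertices; at that point you have essentially reconstructed the tree argument.
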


\begin{proof}
    Let $T \subseteq G$ be a tree rooted at $u$ such that the internal vertices of $T$ appear before $u$ in $\sigma$ and the leaves of $T$ at depth at most $i$ are exactly the vertices in $\reach_i(u, G_\sigma)$.
	Such a tree can be constructed by combining shortest $r$-qualifying $u$-$v$ paths for every $v \in \reach_r(u, G_\sigma)$.
    Suppose there exists a vertex $v \in T$ such that $\deg_T(v) > k$.
    By the definition of $T$, the paths from $v$ to $u$ and from $v$ to the leaves in the subtree rooted at $v$ must be shortest and have length at most $r$.
    At least one vertex along each of these paths must be reachable from $v$ since $u$ and all of the vertices in $\reach_r(u, G_\sigma)$ appear after $v$, since $v <_\sigma u <_\sigma w \in \reach_r(u, G_\sigma)$ by definition.
    Thus, $\est_r(v, G_\sigma) > k$, contradicting our assumption on $\sigma$, and so the vertices of $T$ have degree at most $k$.

    Let $L_v$ denote the set of leaves in the subtree of $T$ rooted at $v$.
    We claim that for every vertex $v \neq u$ in $T$, $\sum_{w \in L_v} (k-1)^{r-d_w} \leq (k-1)^{r-\depth_T(v)}$.
    Assume there exists a vertex $v \neq u$ in $T$ such that the claim does not hold.
    We may assume that $v$ is the vertex maximizing $\depth_T(v)$ out of all such vertices.
    The vertex $v$ cannot be a leaf since then $L_v$ contains only $v$.
    Since we chose $v$ to be the deepest vertex, the claim must hold for the children of $v$.
    Combined with the fact that $v$ has at most $k-1$ children and $\depth_T(v)$ is exactly one less than the depth of each child, the claim must also hold for $v$.

    We can apply the same logic to the root $u$ which has at most $k$ children in $T$, and so $\sum_{w \in L_u} (k-1)^{r-d_w} \leq k \cdot (k-1)^{r-1}$.
    The construction of $T$ implies that $L_u = \reach_r(u, G_\sigma)$, and so the lemma holds as desired.
\end{proof}

Since $(k-1)^{r-d_v} \geq 1$, this proves that $\col_r(G_\sigma) \leq k \cdot (k-1)^{r-1}$ as desired.
To complete the correctness proof of Algorithm~\ref{alg:main}, we only need to demonstrate the weak $r$-coloring number bound.
As noted in Dvo\v{r}\'{a}k's proof of Lemma 6 in~\cite{dvorak2013constant}, the weak $r$-coloring number bound follows directly from the bound on the $r$-coloring number.
We replicate the argument here in our notation for convenience.

\begin{observation}{\cite{dvorak2013constant}} \label{obs:wreach}
	Given a graph $G = (V, E)$ and a total order $\sigma$, the following holds for all $u \in V$:
	\[ \wreach_r(u, G_\sigma) = \bigcup_{v \in \reach_r(u, G_\sigma)} \{v\} \cup \wreach_{r-d_v}(v, G_\sigma). \]
\end{observation}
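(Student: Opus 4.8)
The plan is to prove Observation~\ref{obs:wreach} by establishing the two inclusions separately, using the minimum-distance parameter $d_v$ to stitch together weak qualifying paths from short $r$-qualifying paths.

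First I would prove the inclusion $\supseteq$. Take any $v \in \reach_r(u, G_\sigma)$; by definition $d_v \leq r$ and there is a shortest $r$-qualifying $u$-$v$ path $P$ of length exactly $d_v$ with every internal vertex before $u$ in $\sigma$ (in particular, before $v$, since $v \not<_\sigma u$). Now take any $w \in \wreach_{r-d_v}(v, G_\sigma)$ (including $w = v$ as the trivial case), witnessed by a weak $(r-d_v)$-qualifying $v$-$w$ path $Q$: every internal vertex of $Q$ lies before $w$ in $\sigma$, and $w \not<_\sigma v$. Concatenating $P$ and $Q$ gives a $u$-$w$ walk of length at most $d_v + (r - d_v) = r$; after reducing to a path, I need to check it is weak $r$-qualifying, i.e., $w \not<_\sigma u$ and every internal vertex is before $w$. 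The internal vertices of $P$ are before $u \leq_\sigma w$ (using $w \not<_\sigma v$, $v \not<_\sigma u$ and downward totality to compare), the vertex $v$ itself is before $w$ unless $v = w$, and the internal vertices of $Q$ are before $w$ by assumption; also $w \not<_\sigma u$ for the same reason. So $w \in \wreach_r(u, G_\sigma)$.

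Next I would prove $\subseteq$. Take $w \in \wreach_r(u, G_\sigma)$ with a weak $r$-qualifying $u$-$w$ path $R = (u = p_0, p_1, \dots, p_t = w)$, $t \leq r$, where every $p_i$ with $0 < i < t$ satisfies $p_i <_\sigma w$. I want to split $R$ at the \emph{last} internal vertex that lies before $u$ in $\sigma$; more carefully, let $v = p_s$ be the vertex of $R$ that comes first in $\sigma$ out of $\{p_1, \dots, p_t\}$ — actually the cleaner choice is to pick the largest index $s$ such that the prefix $(p_0, \dots, p_s)$ is an $r$-qualifying path, i.e., $p_1, \dots, p_{s-1} <_\sigma u$ and $p_s \not<_\sigma u$. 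Such an $s \geq 1$ exists since $p_t = w \not<_\sigma u$ at worst forces $s = t$; if some earlier $p_i \not<_\sigma u$ we take the smallest such $i$. Then $v = p_s \in \reach_s(u, G_\sigma) \subseteq \reach_r(u, G_\sigma)$, so $d_v \leq s$. The suffix $(p_s, \dots, p_t)$ has length $t - s \leq r - s \leq r - d_v$, and all its internal vertices $p_{s+1}, \dots, p_{t-1}$ are before $w$ by the weak $r$-qualifying property of $R$, while $w \not<_\sigma v$ needs to be checked — this follows because $p_s = v$ is before all of $p_{s+1}, \dots, p_{t-1}$? No: I need $v <_\sigma w$ or $v = w$, which holds since $v = p_s$ is an internal vertex of $R$ (if $s < t$) hence $v <_\sigma w$, and if $s = t$ then $v = w$. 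Hence $w \in \{v\} \cup \wreach_{r - d_v}(v, G_\sigma)$.

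The main obstacle I anticipate is getting the order comparisons exactly right in the forward ($\supseteq$) direction — specifically verifying that when we concatenate a short $r$-qualifying path $P$ (whose internal vertices sit before $u$) with a weak qualifying path $Q$ (whose internal vertices sit before $w$), the combined path is genuinely weak $r$-qualifying, which requires invoking downward totality of $\sigma$ to conclude $u \leq_\sigma w$ and thus that $P$'s internal vertices also lie before $w$. The subtlety about whether the concatenation is a simple path (it might revisit a vertex) is handled by noting any shortcut only decreases length and preserves the ``before $w$'' property, but this should be remarked on. The length bookkeeping $d_v + (r-d_v) = r$ and the choice of the split index are the routine parts.
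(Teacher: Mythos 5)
Your proof is correct and follows essentially the same route as the paper: the $\subseteq$ direction splits a weak $r$-qualifying $u$-$w$ path at the first vertex $p_s$ with $p_s \not<_\sigma u$, exactly as in the paper's sketch, and your $\supseteq$ direction is the natural concatenation argument the paper leaves implicit. The extra care you take with $d_v \leq s$, monotonicity of $\wreach_\ell$ in $\ell$, and shortcutting the concatenated walk is all sound and only makes explicit what the paper omits.
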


To see this, consider a weak $r$-qualifying path $P$ from $u$ to $w$.
Let $v$ be the first vertex on $P$ to appear after $u$ in $\sigma$.
The vertex $v$ divides $P$ into $P_{uv}$ and $P_{vw}$ so that $P_{uv}$ is an $r$-qualifying path certifying that $v \in \reach_i(u, G_\sigma)$ and $P_{vw}$ is a weak $r$-qualifying path certifying that $w \in \wreach_{r-i}(v, G_\sigma)$.

\begin{lemma}{\cite{dvorak2013constant}} \label{lem:wcol_bound}
	Let $G = (V, E)$ be a graph, and let $\sigma$ be a total order.
	If for some value of $k$, $\sum_{v \in \reach_r(u, G_\sigma)} (k-1)^{r-d_v} \leq k \cdot (k-1)^{r-1}$ for all $u \in V$, then $|\wreach_r(u, G_\sigma)| \leq \frac{k^{r+1} - 1}{k - 1} - 1$ for all $u \in V$.
\end{lemma}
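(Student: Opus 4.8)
The plan is to prove the bound by induction on $r$, using Observation~\ref{obs:wreach} as the engine that reduces the weak $r$-reach of $u$ to weak reaches of \emph{smaller} radius at other vertices. Before starting the induction I would record that the hypothesis is hereditary in the radius: if $\sum_{v \in \reach_r(u, G_\sigma)} (k-1)^{r-d_v} \le k(k-1)^{r-1}$ holds for every $u$, then for every $r' \le r$ and every $u$ we also have $\sum_{v \in \reach_{r'}(u, G_\sigma)} (k-1)^{r'-d_v} \le k(k-1)^{r'-1}$. This is immediate: $\reach_{r'}(u, G_\sigma) \subseteq \reach_r(u, G_\sigma)$, the value $d_v$ does not depend on the radius, and each radius-$r'$ summand equals the corresponding radius-$r$ summand times $(k-1)^{r'-r}$, so restricting the sum to $\reach_{r'}(u, G_\sigma)$ and multiplying through by $(k-1)^{r'-r}$ preserves the inequality (we use $k \ge 2$, which is needed anyway for the right-hand side of the lemma to be defined). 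With this in hand the inductive hypothesis may legitimately be invoked at any vertex and any radius below $r$.

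For the base case $r=1$ we have $\wreach_1(u, G_\sigma) = \reach_1(u, G_\sigma)$, every $v$ in this set has $d_v = 1$, and the hypothesis gives $|\reach_1(u, G_\sigma)| \le k = \tfrac{k^2-1}{k-1} - 1$; the case $r=0$ is trivial since $\wreach_0$ is empty. For the inductive step with $r \ge 2$, Observation~\ref{obs:wreach} yields $|\wreach_r(u, G_\sigma)| \le \sum_{v \in \reach_r(u, G_\sigma)} (1 + |\wreach_{r-d_v}(v, G_\sigma)|)$, and applying the inductive hypothesis to each $v$ at radius $r-d_v$ (which lies in $\{0,\dots,r-1\}$ since $1 \le d_v \le r$) bounds $1 + |\wreach_{r-d_v}(v, G_\sigma)|$ by $\tfrac{k^{r-d_v+1}-1}{k-1}$. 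Thus it suffices to prove
\[
\sum_{v \in \reach_r(u, G_\sigma)} \frac{k^{r-d_v+1}-1}{k-1} \;\le\; \frac{k^{r+1}-1}{k-1} - 1 \;=\; \frac{k^{r+1}-k}{k-1}.
\]

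I would establish this by comparing each summand against the weighted term appearing in the hypothesis. Writing $\tfrac{k^{r-d_v+1}-1}{k-1} = (k-1)^{r-d_v} \cdot \tfrac{k^{r-d_v+1}-1}{(k-1)^{r-d_v+1}}$, the claim follows once the second factor is shown to be at most its value at $d_v=1$, i.e. $\tfrac{k^{r-d_v+1}-1}{(k-1)^{r-d_v+1}} \le \tfrac{k^r-1}{(k-1)^r}$; clearing denominators, this is the elementary inequality $(k-1)^{d_v-1}(k^{r-d_v+1}-1) \le k^r-1$, which holds because $(k-1)^{d_v-1}(k^{r-d_v+1}-1) \le k^{d_v-1}(k^{r-d_v+1}-1) = k^r - k^{d_v-1} \le k^r - 1$. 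Summing and invoking the hypothesis then gives $\sum_v \tfrac{k^{r-d_v+1}-1}{k-1} \le \tfrac{k^r-1}{(k-1)^r}\sum_v (k-1)^{r-d_v} \le \tfrac{k^r-1}{(k-1)^r}\cdot k(k-1)^{r-1} = \tfrac{k^{r+1}-k}{k-1}$, completing the induction. The only step with real content is this last comparison — essentially the observation that, among all ways to distribute the reach of $u$ over distances $1,\dots,r$ subject to the weighted constraint, the one maximizing the crude count $\sum_v \tfrac{k^{r-d_v+1}-1}{k-1}$ concentrates everything at distance $1$ — and once that elementary inequality is verified the remainder is routine bookkeeping with Observation~\ref{obs:wreach}.
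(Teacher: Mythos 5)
Your proposal is correct and follows essentially the same route as the paper's proof: induction on $r$, the decomposition from Observation~\ref{obs:wreach}, the rewrite of each summand as $(k-1)^{r-d_v}\cdot\frac{k^{r-d_v+1}-1}{(k-1)^{r-d_v+1}}$, and the comparison with the weighted hypothesis. The only difference is that you explicitly justify two steps the paper leaves implicit — that the hypothesis is hereditary in the radius (needed to invoke the inductive hypothesis at radius $r-d_v$) and the elementary inequality $(k-1)^{d_v-1}(k^{r-d_v+1}-1)\le k^r-1$ — both of which you verify correctly.
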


\begin{proof}
	We proceed by induction on $r$.
	When $r = 1$, the claim holds since $\wreach_1(u, G_\sigma) = \reach_1(u, G_\sigma)$ for all $u \in V$.
	Assume that $r > 1$ and that the claim holds for all smaller values of $r$.
	By Observation~\ref{obs:wreach},
	\begin{align*}
		|\wreach_r(u, G_\sigma)| &\leq \sum_{v \in \reach_r(u, G_\sigma)} 1 + |\wreach_{r-d_v}(v, G_\sigma)|.
		\intertext{By the inductive hypothesis, we can rewrite the sum as}
		|\wreach_r(u, G_\sigma)| &\leq \sum_{v \in \reach_r(u, G_\sigma)} \frac{k^{r-d_v+1} - 1}{k - 1} \\
		&= \sum_{v \in \reach_r(u, G_\sigma)} (k-1)^{r-d_v} \cdot \frac{k^{r-d_v+1} - 1}{(k - 1)^{r-d_v+1}} \\
		&\leq \frac{k^r - 1}{(k-1)^r} \cdot \sum_{v \in \reach_r(u, G_\sigma)} (k-1)^{r-d_v}.
		\intertext{By Lemma~\ref{lem:col_bound}, we can bound the sum, giving}
		|\wreach_r(u, G_\sigma)| &\leq \frac{k^r - 1}{(k-1)^r} \cdot k \cdot (k-1)^{r-1} \\
		&= \frac{k^{r+1} - k}{k-1} \\
		&= \frac{k^{r+1} - 1 - (k-1)}{k-1} \\
		&= \frac{k^{r+1} - 1}{k-1} - 1.
	\end{align*}
\end{proof}

\begin{theorem} \label{thm:alg_main}
	Given a graph $G = (V, E)$ with $\adm_r(G) = k$, Algorithm~\ref{alg:main} produces a total order $\sigma$ such that $\col_r(G_\sigma) \leq k \cdot (k-1)^{r-1}$ and $\wcol_r(G_\sigma) \leq \frac{k^{r+1} - 1}{k-1}$.
\end{theorem}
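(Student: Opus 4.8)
The plan is to establish, by induction on the iterations of the while loop, the invariant that at the start of each iteration the stored value $S[u]$ equals $\est_r(u, G_{\sigma'})$ for every unordered vertex $u$, where $\sigma'$ is the prefix order accumulated so far. The base case is immediate: under the empty prefix order no vertex precedes another, so the only $r$-qualifying paths from $u$ are single edges to unordered vertices, each vacuously shortest, whence $\est_r(u, G_{[\emptyset]}) = \deg(u)$, matching line~1. Given the invariant, Lemma~\ref{lem:bc_noninc} supplies an unordered vertex with $r$-backconnectivity at most $\adm_r(G) = k$, and since $\est_r(v, G_{\sigma'}) \le \bcon_r(v, G_{\sigma'})$, the minimum of $S$ is at most $k$; hence the vertex $u$ selected in that iteration satisfies $\est_r(u, G_{\sigma'}) \le k$.

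Next I would argue that this guarantee is preserved in the final order $\sigma$. Once $u$ is appended, the partition of $V \setminus \{u\}$ into vertices that precede $u$ and vertices that do not is frozen for the rest of the run (later appends only land on the ``not before $u$'' side), and the conditions in Definition~\ref{def:reachable} for an $r$-qualifying path \emph{from} $u$ depend only on this partition. Consequently $\est_r(u, G_\sigma) = \est_r(u, G_{\sigma'}) \le k$ for every $u$. Feeding $\est_r(u, G_\sigma) \le k$ into Lemma~\ref{lem:col_bound} gives $\sum_{v \in \reach_r(u, G_\sigma)} (k-1)^{r - d_v} \le k \cdot (k-1)^{r-1}$; since each summand is at least $1$, this bounds $|\reach_r(u, G_\sigma)|$, yielding $\col_r(G_\sigma) \le k \cdot (k-1)^{r-1}$, and applying Lemma~\ref{lem:wcol_bound} to the same summation inequality gives $\wcol_r(G_\sigma) \le \frac{k^{r+1}-1}{k-1}$.

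The step I expect to demand the most care is verifying that the update loop touches every vertex whose estimate could have moved, i.e.\ that appending $u$ changes $\est_r(v, G_{\sigma'})$ only for $v \in \reach_r(u, G_\sigma)$ computed with respect to the post-append order. The point is that appending $u$ changes the before/after bipartition seen by an unordered $v$ in exactly one slot---$u$ passes from $v$'s ``after'' side to its ``before'' side---so the collection of shortest $r$-qualifying paths from $v$ can change only if some such path (newly admissible, or no longer admissible because $u$ was its endpoint) runs through $u$. In that case the sub-path joining $v$ to $u$ has all of its interior ordered before $u$ was appended, hence preceding $u$ in the new order; reversing it produces an $r$-qualifying path from $u$ to $v$ of length at most $r$, so $v \in \reach_r(u, G_\sigma)$. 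Conversely, if $u$ lies on no such path then every $v$-path retains its qualifying status and length-minimality, so $\est_r(v, \cdot)$ is unchanged---which is exactly what licenses restricting the loop to $\reach_r(u, G_\sigma)$. Once this is settled, the remainder is bookkeeping with Lemmas~\ref{lem:col_bound} and~\ref{lem:wcol_bound}.
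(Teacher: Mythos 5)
Your proposal is correct and takes essentially the same route as the paper's proof: Lemma~\ref{lem:bc_noninc} together with $\est_r(u, G_\sigma) \leq \bcon_r(u, G_\sigma)$ guarantees the greedily selected vertex has estimated $r$-backconnectivity at most $k$, and Lemmas~\ref{lem:col_bound} and~\ref{lem:wcol_bound} then yield the two bounds. The extra bookkeeping you supply---the invariant that $S$ stores correct estimates, the observation that $\est_r(u,\cdot)$ is frozen once $u$ is appended, and the verification that only vertices in $\reach_r(u, G_\sigma)$ need updating---is detail the paper handles in the discussion surrounding Algorithm~\ref{alg:main} rather than in the proof of the theorem itself.
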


\begin{proof}
	By Lemma~\ref{lem:bc_noninc}, there is always an unordered vertex with estimated $r$-backconnectivity at most $k$ for any prefix order $\sigma$ that the algorithm produces.
	Then, Lemma~\ref{lem:col_bound} shows that $|\reach_r(u, G_\sigma)| \leq k \cdot (k-1)^{r-1}$ for all $u \in V$, and Lemma~\ref{lem:wcol_bound} extends this to show $|\wreach_r(u, G_\sigma)| \leq \frac{k^{r+1} - 1}{k-1}$ for all $u \in V$.
	Thus, $\col_r(G_\sigma) \leq k \cdot (k-1)^{r-1}$ and $\wcol_r(G_\sigma) \leq \frac{k^{r+1} - 1}{k-1}$.
\end{proof}

\section{Conclusion}
In this paper, we considered the hardness and approximability of the generalized coloring numbers.
These parameters characterize bounded expansion classes and have recently been the basis for new FPT algorithms.
We prove that the $r$-coloring number and the weak $r$-coloring number are both para-NP-hard to compute for $r \geq 2$ when parameterized by the natural parameter $k$.
This implies that neither of the generalized coloring numbers admit an XP algorithm parameterized by $k$.
Moreover, our results imply that there exists a constant $c$ such that it is NP-hard to approximate the generalized coloring numbers within a factor of $c$.
These results do not extend to the related parameters treewidth and treedepth (equivalent to $\col_n(G)$ and $\wcol_n(G)$ respectively) since we explicitly subdivide edges up to $r-1$ times.
We also note that our constructions require the generalized coloring numbers to be at least 5, leaving a few open cases.
Graphs with $r$-coloring number 1 or 2 can be easily recognized (independent of $r$) since this is equivalent to recognizing trees and series-parallel graphs respectively.
However, recognizing graphs with $r$-coloring number 3 or 4 remains open.
This is an inherent limitation of \tcts and \rsat, and will require a different technique to resolve.

We also give an improved approximation for the generalized coloring numbers which runs in time $O(b k n^2)$ where $b$ is the $r$-coloring number of the order produced by the algorithm.
The algorithm greedily selects vertices with small estimated $r$-backconnectivity and gives a $(k-1)^{r-1}$-approximation for the $r$-coloring number and an $O(k^{r-1})$-approximation for the weak $r$-coloring number.
Whether constant factor approximations exist for the generalized coloring numbers remains open, even for $r = 2$.

\clearpage
\bibliographystyle{abbrv}
\bibliography{references}

\end{document}